\pgfplotsset{width=5\columnwidth /5, compat = 1.13,
	height = 60\columnwidth /100, grid= major,
	legend cell align = left, ticklabel style = {font=\scriptsize},
	every axis label/.append style={font=\small},
	legend style = {font={\scriptsize}},title style={yshift=-7pt, font = \small} }
\newacronym{mas}{MAS}{multi-agent system}
\newacronym{ml}{ML}{Machine Learning}
\newtheorem{assumption}{\bf Assumption}
\newtheorem{theorem}{\bf Theorem}
\newtheorem{lemma}{\bf Lemma}
\newtheorem{remark}{\bf Remark}
\newtheorem{definition}{\bf Definition}
\newtheorem{property}{\bf Property}
\newcommand{\ProofEndSymbol}{\hfill$\blacksquare$}
\definecolor{tab10_blue}{RGB}{31,119,180}
\definecolor{tab10_orange}{RGB}{255,127,14}
\definecolor{tab10_green}{RGB}{44,160,44}
\definecolor{tab10_red}{RGB}{214,39,40}
\definecolor{set_red}{RGB}{228,26,28}
\definecolor{set_blue}{RGB}{55,126,184}
\definecolor{set_green}{RGB}{77,175,74}
\definecolor{julia_blue}{RGB}{0,154,250}
\definecolor{julia_orange}{RGB}{227,111,71}
\definecolor{julia_green}{RGB}{62,164,78}
\definecolor{julia_red}{RGB}{195,113,210}
\newif\ifZeno
\title{\LARGE \bf
	Decentralized Event-Triggered Online Learning for Safe Consensus \\
	of Multi-Agent Systems with Gaussian Process Regression
}
\author{Xiaobing Dai$^{*1}$, Zewen Yang$^{*2}$, Mengtian Xu$^1$,  Fangzhou Liu$^3$, Georges Hattab$^{2,4}$ and Sandra Hirche$^1$% <-this % stops a space
    \thanks{*Equal contribution. }
	\thanks{
    % *The authors acknowledge the financial support by the Federal Ministry of Education and Research of Germany in the programme of “Souverän. Digital. Vernetzt.”. Joint project 6G-life, project identification number: 16KISK002
    This work has been financially supported by the Federal Ministry of Education and Research of Germany in the programme of ``Souverän. Digital. Vernetzt.'' under joint project 6G-life with project identification number: 16KISK002 and by the Germany Federal Ministry of Health (BMG) under grant No. 2523DAT400 (project ``AI-assisted analysis and visualization of pandemic situations'' | AI-DAVis-PANDEMICS).
    }% <-this % stops a space
	% \thanks{Submitted version.}
	\thanks{
		$^{1}$Chair of Information-oriented Control, TUM School of Computation, Information and Technology, Technical University of Munich, 80333 Munich, Germany.
		{\tt\small xiaobing.dai; mengtian.xu; hirche@tum.de}
		% $^{2}$Center for Artificial Intelligence in Public Health Research (ZKI-PH), Robert Koch Institute, Berlin, Germany.
		% {\tt\small yangz; hattabg@rki.de}
		% $^{3}$Department of Mathematics and Computer Science, Freie Universität Berlin, Berlin, Germany.
	% }%
	% \thanks{
		$^{2}$Center for Artificial Intelligence in Public Health Research (ZKI-PH), Robert Koch Institute, Berlin, Germany.
		{\tt\small yangz; hattabg@rki.de}
		$^{3}$National Key Laboratory of Modeling and Simulation for Complex Systems, Harbin Institute of Technology, 150001, China. 
        {\tt\small fangzhou.liu@hit.edu.cn}
		$^{4}$Department of Mathematics and Computer Science, Freie Universität Berlin, Berlin, Germany.}
}
\begin{document}

	\maketitle
	\thispagestyle{empty}
	\pagestyle{empty}

	%%%%%%%%%%%%%%%%%%%%%%%%%%%%%%%%%%%%%%%%%%%%%%%%%%%%%%%%%%%%%%%%%%%%%%%%%%%%%%%%
	\begin{abstract}
		Consensus control in multi-agent systems has received significant attention and practical implementation across various domains. However, managing consensus control under unknown dynamics remains a significant challenge for control design due to system uncertainties and environmental disturbances. This paper presents a novel learning-based distributed control law, augmented by an auxiliary dynamics. Gaussian processes are harnessed to compensate for the unknown components of the multi-agent system. For continuous enhancement in predictive performance of Gaussian process model, a data-efficient online learning strategy with a decentralized event-triggered mechanism is proposed. Furthermore, the control performance of the proposed approach is ensured via the Lyapunov theory, based on a probabilistic guarantee for prediction error bounds. To demonstrate the efficacy of the proposed learning-based controller, a comparative analysis is conducted, contrasting it with both conventional distributed control laws and offline learning methodologies.
	\end{abstract}

	%%%%%%%%%%%%%%%%%%%%%%%%%%%%%%%%%%%%%%%%%%%%%%%%%%%%%%%%%%%%%%%%%%%%%%%%%%%%%%%%
	\section{Introduction}
	In recent decades, multi-agent systems (MASs) have garnered considerable interest due to their versatile applications, encompassing domains such as aerial drones, ground vehicles, and underwater crafts \cite{chaikalis2023decentralized, shi2018distributed, 8867291}. Given the inherent challenges associated with uncertainties arising from environmental disturbances and inaccurate system dynamics, learning-based control approaches have emerged as a promising avenue for mitigating these issues. These approaches harness machine learning methodologies to estimate unknown components in the system dynamics, drawing from observed data. Subsequently, these data-driven models are seamlessly integrated into the control systems to enhance their adaptability.
	
	The substantial progress witnessed in the field of deep learning, particularly its capacity to model intricate mappings from data, has unequivocally propelled the adoption of neural networks (NNs) for regression tasks. 
	It is evident that the widespread success of NNs in diverse applications has instigated their integration into the domain of consensus control in MASs \cite{olfati-saberConsensusProblemsNetworks2004, zhangConsensusControlMultiple2019}. 
	For instance, an adaptive protocol that incorporates data-driven models acquired through NNs has been proposed in \cite{guanghouDecentralizedRobustAdaptive2009, caiAdaptiveFiniteTime2016}.
	Furthermore, research efforts have delved into security-related issues, including scenarios involving attacks and actuator faults within compromised networks \cite{lvFullyDistributedAdaptive2020, jinAdaptiveNNBasedConsensus2021}. 
	However, as the number of hidden layers and neurons in deep learning models increases, the adjustment of model weights becomes a time-consuming process. This poses a significant practical challenge for real-time implementation in control systems, primarily because the time delay associated with the learning process adversely affects control performance \cite{dai2023can}.
	Additionally, the integration of NNs does not inherently provide precise prediction guarantees, a limitation that hinders their application in safety-critical domains.
	
	In contrast, Gaussian process (GP) regression has emerged as an alternative approach that carries certain advantages in the context of MAS control. 
	GP provides a probabilistic framework that facilitates principled uncertainty quantification, offering a valuable tool for modeling uncertainties within the MAS and enabling safer control strategies \cite{rasmussenGaussianProcessesMachine2006}. 
	GP regression is first introduced for cooperative learning-based control in MASs in \cite{yangDistributedLearningConsensus2021,yangCooperativeLearning2024,yangAAMAS,yangAAMASextend}, which conceives the novel idea of treating each agent as an individual GP model. 
	However, the aggregation strategy requires the computation of the predictions for neighbor agents, resulting in a larger demand for local computational resources.
	To reduce the computational burden, subsequently, a dynamic average consensus algorithm \cite{kiaTutorialDynamicAverage2019} is integrated to enhance the posterior mean and variance of GPs with only prediction on the local states for each agent \cite{ledererCooperativeControlUncertain2023}.
	But the aforementioned works overlook the imperative consideration of online learning with MAS control, leading to a deficiency in the ability to update GP models and a restricted efficacy in prediction.
	To address this issue, an online learning approach by adding new training samples was proposed for MAS flocking control in \cite{beckersLearningRigiditybasedFlocking2022}. 
	However, it only considers the data collection in a fixed frequency with the deletion strategy for data storage management and lacks the consideration of the different importance of the data.
	To tackle this problem, an event-triggered learning approach for formation control of MASs is presented in \cite{dai2024cooperative} to choose only valuable data for GP-based control. 
	Nevertheless, each agent requires access to global information such as communication topologies, which may not be feasible in certain circumstances.
		
	In this paper, we consider the average consensus for a multi-agent system with unknown dynamics.
	A novel distributed learning-based consensus controller featuring an auxiliary dynamics is proposed, which isolates the influence of the dynamics uncertainties from the estimation of the average states.
	Moreover, Gaussian process regression is employed to compensate for the unknown dynamics.
	For better control performance by using the GP-based controller, a decentralized event-triggered online learning strategy is introduced for control performance improvement with efficient data storage management. 
	Furthermore, a stability analysis of the MAS is included with the probabilistic prediction guarantees for average consensus error, further enhancing its operational effectiveness for safe learning-based control tasks.
	To underscore the superiority of the proposed approach, numerical simulations are carried out demonstrating its effectiveness through a comparative analysis against conventional control and learning strategies.

%	The remaining content of this paper is structured as follows. 
%	\cref{section_problem_setting} presents the problem setting and proposes the distributed consensus control law. 
%	In \cref{section_distributed_event_trigger}, the development of the decentralized event-triggered strategy for online learning is provided with control performance guarantee. 
%	\cref{section_simulation} encompasses the execution of numerical simulations, followed by the conclusion of this paper in \cref{section_conclusion}. \looseness=-1
	%%%%%%%%%%%%%%%%%%%%%%%%%%%%%%%%%%%%%%%%%%%%%%%%%%%%%%%%%%%%%%%%%%%%%%%%%%%%%%%%
	\section{Problem Setting and Preliminaries} \label{section_problem_setting}
	
	In this paper, a MAS with $N \in \mathbb{N}_+$ homogeneous agents is considered, whose dynamics is expressed by 
	\begin{align} \label{eqn_dynamics}
		\dot{x}_i = h(x_i) + f(x_i) + g(x_i) u_i, ~ \forall i = 1,\cdots, N,
	\end{align}
	where $x_i \in \mathbb{X} \subset \mathbb{R}$ with a compact domain $\mathbb{X}$ and $u_i \in \mathbb{R}$ denote the system state and control input, respectively.  
	For notational simplicity, we focus on scalar agent dynamics, whose results can be directly extended to multi-dimensional cases by using multi-input-output machine learning tools and the Kronecker operator. 
	The functions $h: \mathbb{X} \to \mathbb{R}$ and $g: \mathbb{X} \to \mathbb{R}$ describe the known parts of the individual dynamics of the agents, while the function $f: \mathbb{X} \to \mathbb{R}$ is regarded as unknown components such as unmodeled dynamics and environmental uncertainties.
	To ensure the controllability for each agent, the following assumption on $g(\cdot)$ is required.
	\begin{assumption} \label{assumption_g}
		For any $x_i \in \mathbb{X}$, it has $g(x_i) \ne 0$.
	\end{assumption}
	The non-singularity of $g(\cdot)$ is a common prerequisite for the controller non-autonomous system, such that the control input $u_i$ is effective at any state.
	While this assumption induces restrictions on the system class, the focus of this paper is on the design of event-triggered learning-based controllers.
	The extension to other system classes is left in the future work.
	
	The agents communicate through a network, whose topology is presented by a time-invariant undirected connected graph $\mathcal{G} = \{ \mathcal{V}, \mathcal{E}\}$, where $\mathcal{V} = \{ 1, \cdots, N \}$ is the vertex set representing the indices of agents and $\mathcal{E} \in \mathcal{V} \times \mathcal{V}$ denotes the edge set. 
	The topology of $\mathcal{G}$ is characterized by an adjacency matrix $\bm{\mathcal{A}} = [a_{ij}]_{i,j \in \mathcal{V}} \in \mathbb{R}^{N \times N}$. The entry $a_{ij} = a_{ji} = 1$, when there exists a communication channel such that the agent $i$ can send and receive the information from agent $j$, i.e., $(i,j) \in \mathcal{E}$, otherwise $a_{ij} =a_{ji} = 0, \forall i,j \in \mathcal{V}$. The set $\mathcal{N}_i$ of agent $i$ contains all the neighboring agents $j \in \mathcal{V}$ indicating $a_{ij} = 1$.
	The Laplacian matrix of $\mathcal{G}$ is defined as $\bm{\mathcal{L}} = [l_{ij} ]_{i,j \in \mathcal{V}} \in \mathbb{R}^{N \times N}$ with $l_{ii} = \sum_{j=1}^N a_{ij}$ and $l_{ij} = - a_{ij}$ for $j \ne i$.
	
	The control task is to achieve average consensus, i.e., the agent's state $x_i$ tends to reach $\bar{x}^* = N^{-1} \bm{1}_N^T \bm{x}(0)$ for $i = 1, \cdots, N$, where $\bm{x} = [x_1, \cdots, x_N]^T$ denotes the concatenated state. 
	However, due to the unknown function $f(\cdot)$, the asymptotic stability of the MAS is not expected. Therefore, in this paper, we investigate $\epsilon$-average consensus instead, whose definition is given as follows.
	\begin{definition} \label{definition_consensus}
		If there exists a well-defined constant $\epsilon \in \mathbb{R}_{0,+}$, such that $\lim_{t \to \infty} \| \bm{x}(t) - \bm{1}_N \bar{x}^* \| \le \epsilon$.
		Then, the multi-agent system achieves $\epsilon$-average consensus.
	\end{definition}

	To achieve such a control task, we firstly consider the conventional controller using consensus law $r_i$ from \cite{saber2003consensus} as \looseness=-1
    \begin{subequations} \label{eqn_conventional_control_law}
        \begin{align}
            &u_i = - \frac{1}{g(x_i)} \left( h(x_i) + \hat{f}_i(x_i) + c r_i \right), \\
            &r_i = \sum_{j \in \mathcal{N}_i} (x_i - x_j),
        \end{align}	
    \end{subequations}
	where $c \!\in\! \mathbb{R}_+$ and $\hat{f}_i(x_i)$ is the prediction of unknown $f(x_i)$ from agent $i$.
	While the controlled system achieves consensus by using \eqref{eqn_conventional_control_law} under dynamics uncertainties from unknown $f_i(\cdot)$ proven in \cite{li2014distributed}, the state of each agent is not guaranteed in the compact domain $\mathbb{X}$ and not close to $\bar{x}^*$.
	Without rigorous proof, an example is shown in Appendix reflecting the potential instability of the controlled system with \eqref{eqn_conventional_control_law}.
	To address this problem, a novel distributed control law is proposed integrating an auxiliary dynamics as follows \looseness=-1
\begin{subequations}\label{eqn_controller}
		\begin{align}
			&u_i = - \frac{1}{g(x_i)} \left( h(x_i) + \hat{f}_i(x_i) + c r_i - \dot{\bar{x}}_i \right), \\
			&r_i = \sum_{j \in \mathcal{N}_i} (\tilde{x}_i - \tilde{x}_j) + \tilde{x}_i,
		\end{align}
	\end{subequations}
	where $c$ is the control gain and the error between the current state $x_i$ and auxiliary state $\bar{x}_i$ is denoted by $\tilde{x}_i = x_i - \bar{x}_i$.
	The auxiliary state $\bar{x}_i$ of the agent $i$ is obtained by solving the following dynamics as
	\begin{align} \label{eqn_observer_dynamics}
		\dot{\bar{x}}_i = - \bar{c} \sum_{j \in \mathcal{N}_i} (\bar{x}_i - \bar{x}_j), ~\forall i = 1, \cdots, N,
	\end{align}
	with gain $\bar{c} \in \mathbb{R}_+$ and the initial condition $\bar{x}_i(0) = x_i(0)$.
	
	The estimation function $\hat{f}_i(\cdot)$ of the unknown function $f(\cdot)$ in \eqref{eqn_controller} associated with the agent $i$ is obtained from data-driven method using the individual data set $\mathbb{D}_i = \{ x_i^{(\iota)}, y_i^{(\iota)} \}_{\iota \in \mathbb{N}}$ of each agent $i \in \mathcal{V}$, which satisfies the following assumption.
	\begin{assumption} \label{assumption_dataset}
		The data pair $\{ x_i^{(\iota)}, y_i^{(\iota)} \}$ is available on each agent $i \in \mathcal{V}$ at any time instance $t_i^{(\iota)} \in \mathbb{R}_{0,+}$ with $(\iota) \in \mathbb{N}$, where the measured output denotes $y_i^{(\iota)} = \dot{x}_i(t_i^{(\iota)}) - h(x_i(t_i^{(\iota)})) \!-\! g(x_i(t_i^{(\iota)})) u_i(t_i^{(\iota)}) + w_i^{(\iota)}$.
		The measurement noise $w_i^{(\iota)} \in \mathbb{R}$ follows a zero-mean, independent and identical Gaussian distribution with variance $\sigma_{n}^2$ and $\sigma_{n} \in \mathbb{R}_+$.
	\end{assumption}
	\cref{assumption_dataset} indicates each agent can collect the data individually.
	Additionally, this assumption accommodates the Gaussian noise in $y_i^{(\iota)}$, which often stems from the numerical approximation of $\dot{x}_i(t_i^{(\iota)})$, such as the utilization of finite difference methods. 
	The relaxation of the noise distribution exists as discussed in works such as \cite{chowdhury2017kernelized, maddalena2021deterministic}. 
	Moreover, \cref{assumption_dataset} provides the possibility for the continuous enhancement of $\hat{f}_i(\cdot)$ through ongoing online data collection.
	
	In order to collect the necessary data pairs individually on each agent with high data efficiency, we introduce a decentralized event-triggered online learning strategy. 
	This strategy ensures that the pair ${ x_i^{(\iota)}, y_i^{(\iota)} }$ will be incorporated into the training dataset $\mathbb{D}_i$ of agent $i$ at time $t_i^{(\iota)}$ in accordance with the following condition
	\begin{align} \label{eqn_general_trigger_formulation}
		t_i^{(\iota)} = \min \left\{ t_i : t_i > t_i^{(\iota-1)} ~\wedge~ \rho(x_i(t_i), \bar{x}_i(t_i)) > 0 \right\},
	\end{align}
	where $\rho(\cdot): \mathbb{X} \times \mathbb{X} \to \mathbb{R}$ indicates the universal trigger function for all agents by evaluated by using individual state information including the auxiliary state $\bar{x}_i$.
	The detailed expression of $\rho(\cdot)$ will be presented in \cref{section_distributed_event_trigger}.
	
	%%%%%%%%%%%%%%%%%%%%%%%%%%%%%%%%%%%%%%%%%%%%%%%%%%%%%%%%%%%%%%%%%%%%%%%%%%%%%%%%\
	\section{Decentralized Event-Triggered Online Learning with Gaussian Processes} \label{section_distributed_event_trigger}
	
	\subsection{Gaussian Process Regression}
	
	To infer the unknown $f(\cdot)$, Gaussian process regression is adopted, which induces a distribution of $f(\cdot)$ by the prior mean $m(\cdot): \mathbb{X} \to \mathbb{R}$ and kernel function $\kappa(\cdot, \cdot): \mathbb{X} \times \mathbb{X} \to \mathbb{R}_{0,+}$, i.e., $f(\cdot) \sim \mathcal{GP}(m(\cdot), \kappa(\cdot,\cdot))$.
	The prior mean $m(\cdot)$ reflects the known part of $f(\cdot)$, which for the dynamics in \eqref{eqn_dynamics} is encoded in $h(\cdot)$.
	Therefore, $m(\cdot) = 0$ is set.
	The kernel function $\kappa(\cdot)$ indicates the covariance between two sample points and satisfies the following assumption.
	\begin{assumption} \label{assumption_GP}
		The continuous function $f(\cdot)$ is a sample from a Gaussian process $\mathcal{GP}(0, \kappa(\cdot,\cdot))$ with a Lipschitz kernel $\kappa(|x - x'|) = \kappa(x,x')$ w.r.t $|x - x'|$.
	\end{assumption}
	From this assumption, the prior distribution of the unknown function $f(\cdot)$ is defined.
    	Moreover, the assumption on the kernel $\kappa(\cdot, \cdot)$ can be achieved by choosing any continuous function reflecting the continuity of $f(\cdot)$, and then its Lipschitz continuity is derived by considering the compact input domain $\mathbb{X}$.
	And the commonly used kernels, such as square exponential kernel, satisfy the Lipschitz condition.
	Therefore, this assumption is practically not restrictive.
	
	Consider an individual data set $\mathbb{D}_i$ on agent $i \in \mathcal{V}$ containing $M \in \mathbb{N}$ training samples and satisfying \cref{assumption_dataset}, the posterior mean $\mu_i(\cdot)$ and variance $\sigma_i(\cdot)$ are obtained by using the Gaussian process model with \cref{assumption_GP} as
	\begin{align}
		&\mu_i(x_i) = \bm{k}_{X,i}^T (\bm{K}_i + \sigma_n^2 \bm{I}_M)^{-1} \bm{y}, \\
		&\sigma_i^2(x_i) = \kappa(0) - \bm{k}_{X,i}^T (\bm{K}_i + \sigma_n^2 \bm{I}_M)^{-1} \bm{k}_{X,i},
	\end{align}
	where $\bm{k}_{X,i} = [\kappa(x_i, x_i^{(1)}), \cdots, \kappa(x_i, x_i^{(M)})]^T$, $\bm{K}_i = [\kappa(x_i^{(p)}, x_i^{(q)})]_{p,q = 1, \cdots, M}$ and $\bm{y} = [y_i^{(i)}, \cdots, y_i^{(M)}]^T$.
	While the obtained posterior mean $\mu_i(\cdot)$ is adopted as the compensation of the unknown function $f(\cdot)$ in \eqref{eqn_controller}, i.e., $\hat{f}_i(\cdot) = \mu_i(\cdot)$, the posterior variance $\sigma_i(\cdot)$ is used to quantify the prediction error shown as follows.
	\begin{lemma} [\! \cite{lederer2019uniform}] \label{lemma_GP_bound}
		Consider the Gaussian process model on agent $i$ for the inference of an unknown function $f(\cdot)$ satisfying \cref{assumption_GP}, which is based on a data set under \cref{assumption_dataset}.
		Choose $\delta \in (0,1) \subset{R}$ and $\tau_i \in \mathbb{R}_+$ such that $\gamma_i \le \min_{x_i \in \mathbb{X}} \sqrt{\beta} \sigma_i(x_i)$ with
		\begin{align}
			&\beta = 2 \log \left( \frac{1}{2 \delta \tau_i} \Big(\max_{x \in \mathbb{X}} x - \min_{x \in \mathbb{X}} x \Big) + \frac{1}{\delta} \right), \\
			&\gamma_i = (L_f + L_{\mu,i} + \sqrt{\beta} L_{\sigma}) \tau_i,
		\end{align}
		where $L_f$, $L_{\mu,i}$, $L_{\sigma}$ are the Lipschitz constants for $f(\cdot)$, $\mu_i(\cdot)$, $\sigma_i(\cdot)$, respectively.
		Then, the prediction error is uniformly bounded by
        \begin{align}
			|f(x_i) - \mu_i(x_i)| \le \eta_i(x_i) = 2 \sqrt{\beta} \sigma_i(x_i), ~ \forall x_i \in \mathbb{X},
		\end{align}
        with probability of at least $1 - \delta$.
	\end{lemma}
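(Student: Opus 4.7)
The plan is to combine a pointwise concentration inequality for Gaussian process posteriors with a covering/union-bound argument over the compact input domain $\mathbb{X}$, followed by a Lipschitz interpolation step that transfers the grid bound to every $x \in \mathbb{X}$.

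First, I would exploit that, for any fixed $x \in \mathbb{X}$, the posterior distribution of $f(x)$ is Gaussian with mean $\mu_i(x)$ and variance $\sigma_i^2(x)$ under \cref{assumption_GP} and \cref{assumption_dataset}. The standard Gaussian tail inequality then yields the pointwise bound $\Pr(|f(x) - \mu_i(x)| > \sqrt{\beta}\,\sigma_i(x)) \le 2 \exp(-\beta/2)$. Choosing $\beta$ as in the lemma makes this failure probability small enough to absorb a subsequent union bound.

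Second, I would construct a finite $\tau_i$-net $\{\tilde{x}_k\}_{k=1}^{N_\tau}$ of $\mathbb{X}$, which since $\mathbb{X}$ is a bounded interval requires at most $N_\tau \le (\max_{x}x - \min_{x}x)/(2\tau_i) + 1$ grid points. Applying the pointwise inequality at each $\tilde{x}_k$ with failure budget $\delta\tau_i/(\max x - \min x)$ per point and taking a union bound, the very $\beta$ defined in the statement ensures that, with probability at least $1-\delta$, $|f(\tilde{x}_k) - \mu_i(\tilde{x}_k)| \le \sqrt{\beta}\,\sigma_i(\tilde{x}_k)$ holds simultaneously at every grid point.

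Third, I would pass from the grid to arbitrary $x \in \mathbb{X}$ using the Lipschitz continuity of $f$, $\mu_i$, and $\sigma_i$. For any $x$, let $\tilde{x}_k$ be the nearest grid point so $|x-\tilde{x}_k| \le \tau_i$. Then a triangle inequality gives
\begin{align*}
|f(x)-\mu_i(x)| &\le |f(\tilde{x}_k)-\mu_i(\tilde{x}_k)| + (L_f + L_{\mu,i})\tau_i \\
&\le \sqrt{\beta}\,\sigma_i(\tilde{x}_k) + (L_f + L_{\mu,i})\tau_i \\
&\le \sqrt{\beta}\,\sigma_i(x) + \sqrt{\beta} L_\sigma \tau_i + (L_f + L_{\mu,i})\tau_i \\
&= \sqrt{\beta}\,\sigma_i(x) + \gamma_i.
\end{align*}
Invoking the hypothesis $\gamma_i \le \min_{x_i \in \mathbb{X}} \sqrt{\beta}\,\sigma_i(x_i)$ upgrades $\gamma_i$ to an $\sqrt{\beta}\,\sigma_i(x)$ term, producing the stated $2\sqrt{\beta}\,\sigma_i(x)$ bound uniformly on $\mathbb{X}$.

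The main obstacle I expect is bookkeeping the interplay between $\delta$, $\tau_i$, and $\beta$: the grid resolution $\tau_i$ controls both the number of grid points (hence the per-point failure budget and thus $\beta$) and the Lipschitz slack $\gamma_i$, which in turn depends on $\beta$ through $\sqrt{\beta} L_\sigma \tau_i$. Choosing $\beta$ exactly as in the lemma so that the union-bound pointwise probability matches $\delta$ after accounting for the $N_\tau$ covering points, while also ensuring the Lipschitz slack is dominated by $\sqrt{\beta}\,\sigma_i$, is the delicate part; the remaining technical ingredient is merely verifying that $\mu_i$ and $\sigma_i$ are indeed Lipschitz on $\mathbb{X}$, which follows from the Lipschitz kernel assumption together with the boundedness of the posterior weights.
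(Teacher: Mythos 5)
Your proposal is correct and follows exactly the pointwise Gaussian tail bound, $\tau_i$-covering, union bound, and Lipschitz-interpolation argument of the cited reference \cite{lederer2019uniform}; the paper itself states this lemma without proof, importing it from that source, so your reconstruction matches the intended derivation. The only minor bookkeeping point is that the two-sided tail should be bounded as $\Pr(|Z|>c)\le \exp(-c^2/2)$ rather than $2\exp(-c^2/2)$, so that the union bound over the $\bigl(\max_{x\in\mathbb{X}}x-\min_{x\in\mathbb{X}}x\bigr)/(2\tau_i)+1$ grid points lands exactly on the failure probability $\delta$ with the stated $\beta$.
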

	Although with conservatism, \cref{lemma_GP_bound} provides a numerical way to probabilistically bound the prediction error, which is commonly used for the control in safety-critical scenarios.
	Note that \cref{lemma_GP_bound} requires the Lipschitz continuity of the unknown function $f(\cdot)$ represented by $L_f$, which is easily shown by considering the continuity of $f(\cdot)$ in \cref{assumption_GP} and the compact domain $\mathbb{X}$.
	The approximation of $L_f$ can be obtained by using the data-driven methods as in \cite{lederer2019uniform}.
	Moreover, the Lipschitz constants $L_{\mu,i}$ and $L_{\sigma}$ exist due to the Lipschitz kernel function $\kappa(\cdot)$ in \cref{assumption_GP}, whose explicit computation follows \cite{yin2023learningbased}.
	
	\subsection{Decentralized Event-Triggered Learning} \label{subsection_event_trigger}
	
	Since the performance of a learning based controller incorporated with the data-driven model $\hat{f}(\cdot)$ in \eqref{eqn_controller} is contingent upon the precision of its predictions \cite{lederer2019uniform}, it is common to employ online learning techniques to enhance prediction accuracy for better control performance. 
	To facilitate effective and efficient online learning, a decentralized event-trigger mechanism is designed to improve the efficiency of the learning process. 
	This mechanism induces a careful selection of the training data relying on a trigger function denoted as $\rho(\cdot)$ in \eqref{eqn_general_trigger_formulation}.
	The specific choice of this trigger function is a vital aspect of our approach, aimed at achieving data efficiency and improving the prediction performance of the system during real-time operation.
	\begin{align} \label{eqn_trigger_function}
		\rho(x_i, \bar{x}_i) \!=\! \eta_i(x_i) \!-\! \max \left\{ c |x_i \!-\! \bar{x}_i| \!-\! \sqrt{N-1} \underline{\eta} , \underline{\eta} \right\},
	\end{align}
	where $\underline{\eta} = 2 \sqrt{\beta} \sigma_n$.
	The trigger function $\rho(\cdot)$ is composed of three parts, namely the current local prediction error $\eta_i(x_i)$ defined in \cref{lemma_GP_bound}, the constant term $\sqrt{N-1} \underline{\eta}$ and the error between actual state $x_i$ and auxiliary state $\bar{x}_i$, i.e., $\tilde{x}_i$.
    Note that the time stamp $t_i$ in the trigger function $\rho(\cdot)$ in \eqref{eqn_general_trigger_formulation} is dropped in \eqref{eqn_trigger_function} for notational simplicity.
	
	\begin{remark}
		While based on the distributed control as in \eqref{eqn_controller}, the operations for GP models on each agent, including prediction and online learning, are only based on the local information, i.e., $x_i$ and $\bar{x}_i$.
		Therefore, the proposed event-trigger is decentralized similarly as in \cite{yang2016decentralized}.
		Moreover, the distributed control law \eqref{eqn_controller} with auxiliary dynamics \eqref{eqn_observer_dynamics} and the decentralized event-trigger \eqref{eqn_general_trigger_formulation} only require the exchange of system state $x_i$ and $\bar{x}_i$ among agents, which reduces the communication burden in network compared with \cite{lederer2022cooperative}.
		Furthermore, it requires no computation for the predictions of the neighbors, decreasing the local computational demand on each agent compared to \cite{yangDistributedLearningConsensus2021}.
	\end{remark}
	
	\begin{remark}
		The constant $\sqrt{N-1} \underline{\eta}$ is introduced, due to the lack of information about actions of online learning on other agents.
		Moreover, by considering its non-negativity because of $N \ge 1$ and $\sigma_n > 0$ as in \cref{assumption_dataset}, the event-trigger becomes more conservative than single agent case in \cite{dai2023can}, i.e., inducing more trigger events.
		Moreover, the conservatism grows with the number of agents $N$, which is intuitive due to more unknown actions on more agents.
		In particular for a single agent system as in \cite{umlauft2019feedback}, the trigger function $\rho(\cdot)$ in \eqref{eqn_trigger_function} is degenerated to $\rho(x_i, \bar{x}_i) = \eta_i(x_i) - \max \{ c |\tilde{x}_i|, \underline{\eta} \}$.
		The degenerated form indicates the conditions $\rho(x_i, \bar{x}_i) > 0$ and $\{\eta_i(x_i) > c |\tilde{x}_i| \vee \eta_i(x_i) > \underline{\eta}\}$ are equivalent and identical as in \cite{umlauft2019feedback, lederer2021gaussian, dai2023can}, whose effects on the learning-based control are proven with ultimately bounded error.
	\end{remark}
	
	When the trigger condition \eqref{eqn_general_trigger_formulation} is satisfied on agent $i$, the GP model and the prediction $\mu_i(x_i)$ on agent $i$ will be updated by adding new training data.
	We denote the updated posterior mean, variance, and the corresponding prediction error by $\mu_i^+(x_i)$, $\sigma_i^+(x_i)$, and $\eta_i^+(x_i)$, respectively, using the updated data set satisfying
	\begin{align} \label{eqn_prediction_error_bound_updated}
		|f(x_i) - \mu_i^+(x_i) | \le \eta_i^+(x_i) \le \underline{\eta}, ~\forall x_i \in \mathbb{X},
	\end{align}
    with a probability of at least $1 \!-\! \delta$, which is proven by considering the $1$-point upper bound $\sigma_{i,\text{1pt}}^+$ of variance \cite{williams2000upper} as \looseness=-1
	\begin{align} \label{eqn_sigma_after_update}
		( \sigma_{i,\text{1pt}}^+ )^2 = \kappa(0) - \frac{\kappa(0)^2}{\kappa(0) + \sigma_n^2} = \frac{\kappa(0) \sigma_n^2}{\kappa(0) + \sigma_n^2} \le \sigma_n^2
	\end{align}
	and the fact that more data results in smaller posterior variance \cite{lederer2021gaussian}, i.e., $\sigma_i^+(x_i) \le \sigma_{i,\text{1pt}}^+ \le \sigma_n$.
    Combining with the definition of the prediction error bound in \cref{lemma_GP_bound}, the upper bound of $\eta_i^+(x_i)$ is derived as in \eqref{eqn_prediction_error_bound_updated}.
	For notational simplicity, define $\hat{\mu}_i(x_i) = \mu_i(x_i)$ and $\hat{\eta}_i(x_i) = \eta_i(x_i)$ for all agent $i$ with $\rho(x_i, \bar{x}_i) \le 0$, as well as $\hat{\mu}_i(x_i) = \mu_i^+(x_i)$ and $\hat{\eta}_i(x_i) = \eta_i^+(x_i)$ for all agent $i$ with $\rho(x_i, \bar{x}_i) > 0$.
	Note that only the prediction error bound without model update, i.e., $\mu_i(x_i)$, is used for the evaluation of $\rho(\cdot)$ in \eqref{eqn_trigger_function}.
	Then, $\hat{f}_i(x_i) = \hat{\mu}_i(x_i)$ and the following properties can be derived by considering the event-trigger \eqref{eqn_general_trigger_formulation} with \eqref{eqn_trigger_function}.
	\begin{property} \label{property_trigger}
		For any agent $i \in \mathcal{V}$ using the event-trigger \eqref{eqn_general_trigger_formulation} with \eqref{eqn_trigger_function}, the following statements hold.
		\begin{enumerate}
			\item If $c |\tilde{x}_i| \le (\sqrt{N-1} + 1) \underline{\eta}$, then $\hat{\eta}_i(x_i) \le \underline{\eta}$.
			\item If $c |\tilde{x}_i| > (\sqrt{N-1} + 1) \underline{\eta}$ and $\rho(x_i, \bar{x}_i) > 0$, then $\hat{\eta}_i(x_i) \le \underline{\eta}$.
			\item If $c |\tilde{x}_i| > (\sqrt{N-1} + 1) \underline{\eta}$ and $\rho(x_i, \bar{x}_i) \le 0$, then
			\begin{align}
				c^2 |\tilde{x}_i|^2 \ge \eta_i^2(x_i) + (N - 1) \underline{\eta}^2.
			\end{align}
		\end{enumerate}
	\end{property}
    % \vspace{0.1cm}
	\begin{proof}
		To prove the first statement, two cases with different sign of $\rho(\cdot)$ are considered.
		The case with $\rho(x_i, \bar{x}_i) > 0$ activates the data collection on agent $i$ and then results in $\hat{\eta}_i(x_i) = \eta_i^+(x_i) \le \underline{\eta}$ by comparing the posterior variance as in \eqref{eqn_sigma_after_update}.
		For the case with $\rho(x_i, \bar{x}_i) \le 0$, no model update occurs on agent $i$, and from the design of $\rho(\cdot)$ in \eqref{eqn_trigger_function} it holds
		\begin{align} \label{eqn_eta_no_trigger_in_bound}
			\hat{\eta}_i(x_i) = \eta_i(x_i) \le c | \tilde{x}_i | - \sqrt{N - 1} \underline{\eta} \le \underline{\eta}.
		\end{align}
		% Note that $\sqrt{N} - \sqrt{N - 1}$ is monotonically decreasing w.r.t $N \in \mathbb{N}_+$, and the maximum achieves when $N = 1$, i.e., $\max_{N \in \mathbb{N}_+} (\sqrt{N} - \sqrt{N - 1}) = \sqrt{1} - \sqrt{1 - 1} = 1$.
		% Then, \eqref{eqn_eta_no_trigger_in_bound} is reformulated as $\hat{\eta}_i(x_i) \le \underline{\eta}$.
		Combining the results from these two cases, the first statement is proven.
		
		The second statement is directly obtained by considering that the model update is activated at agent $i$, since $\rho(x_i, \bar{x}_i) > 0$.
		Therefore, the boundness of the prediction error $\hat{\eta}_i(x_i) \!=\! \eta_i^+(x_i)$ by $\underline{\eta}$ is derived by using \eqref{eqn_sigma_after_update}.
		
		The proof of the third statement considers the non-positivity of $\rho(x_i, \bar{x}_i)$, where $\max \left\{ c |\tilde{x}_i| - \sqrt{N-1} \underline{\eta} , \underline{\eta} \right\} = c |\tilde{x}_i| - \sqrt{N-1} \underline{\eta}$.
        Then, $\rho(x_i, \bar{x}_i) < 0$ is reformulated as
		\begin{align}
			c^2 |\tilde{x}_i|^2 \!\ge\! ( \eta_i(x_i) \!+\! \sqrt{N \!-\! 1} \underline{\eta})^2 \!\ge\! \eta_i^2(x_i) \!+\! (N \!-\! 1) \underline{\eta}^2,
		\end{align}
		where the second inequality is derived by considering the positivity of $\eta_i(x_i)$ and $\underline{\eta}$ under $N \ge 1$.
	\end{proof}
	
	\begin{remark}
        % The maximum operator in \eqref{eqn_trigger_function} ensures the online learning only occurs when $\eta_i(x_i) > \underline{\eta}$, such that the updated prediction error bound satisfies $\eta_i^+(x_i) \le \underline{\eta}$.
        % Note that the results in \cref{property_trigger} still hold with this maximum operator in \eqref{eqn_trigger_function}, but saving the unnecessary data collection and inducing high data efficiency.
        The maximum operator in \eqref{eqn_trigger_function} ensures the online learning only occurs when $\eta_i(x_i) > \underline{\eta}$, which saves the unnecessary data collection and induces high data efficiency.
	\end{remark}
	
	Note that \cref{property_trigger} only shows necessary statements for further derivation, i.e., the analysis of control performance, which is shown in \cref{subsection_performance}.
	
	\subsection{Consensus Performance Analysis} \label{subsection_performance}
	
	To analyze the control performance under distributed control law \eqref{eqn_controller} with auxiliary dynamics \eqref{eqn_observer_dynamics} and event-triggered online learning \eqref{eqn_trigger_function}, the error between the state $x_i$ of agent $i$ and their ideal average $\bar{x}^*$ defined in \cref{definition_consensus} is studied, which is shown as follows.
	\begin{theorem} \label{theorem_consensus_error}
		Consider a MAS containing $N$ agents under a connected undirected topology $\mathcal{G}$ with individual dynamics \eqref{eqn_dynamics} satisfying \cref{assumption_g}, which is controlled by \eqref{eqn_controller} with \eqref{eqn_observer_dynamics}.
		To infer the unknown function $f(\cdot)$ in \eqref{eqn_controller}, a GP model satisfying \cref{assumption_GP} is employed for each agent with a data set under \cref{assumption_dataset}.
		Moreover, each agent adopts the decentralized event-trigger \eqref{eqn_general_trigger_formulation} and \eqref{eqn_trigger_function} for online learning by adding new data pairs into the individual data set.
		Choose $\delta \in (0, 1 / N)$, then the system achieves probabilistic $\epsilon$-average consensus with $\epsilon = 2 c^{-1} N \underline{\eta}$, i.e., 
		\begin{equation}
			\Pr \left\{ \lim_{t \to \infty}\| \bm{x}(t) - \bm{1}_N \bar{x}^* \| \le \epsilon \right\} \ge 1 - N \delta.
		\end{equation}
	\end{theorem}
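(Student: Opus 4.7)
The plan is to bound the consensus error via the triangle inequality $\|\bm{x}(t) - \bm{1}_N \bar{x}^*\| \le \|\tilde{\bm{x}}(t)\| + \|\bar{\bm{x}}(t) - \bm{1}_N \bar{x}^*\|$, where $\tilde{\bm{x}} = \bm{x} - \bar{\bm{x}}$. The second term is handled immediately: equation \eqref{eqn_observer_dynamics} is the standard continuous-time average consensus protocol with $\bar{x}_i(0) = x_i(0)$, so on a connected undirected graph the average $\bm{1}_N^T\bar{\bm{x}}(t)/N$ is preserved and $\bar{\bm{x}}(t) \to \bm{1}_N \bar{x}^*$ as $t \to \infty$. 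The substance of the proof therefore reduces to showing $\limsup_{t \to \infty}\|\tilde{\bm{x}}(t)\| \le 2c^{-1}N\underline{\eta}$ with probability at least $1 - N\delta$.

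First I would derive the closed-loop error dynamics. Substituting \eqref{eqn_controller} into \eqref{eqn_dynamics} and subtracting \eqref{eqn_observer_dynamics} cancels the known terms, giving $\dot{\tilde{\bm{x}}} = (\bm{f}(\bm{x}) - \hat{\bm{f}}(\bm{x})) - c(\bm{\mathcal{L}} + \bm{I}_N)\tilde{\bm{x}}$, where $\bm{\mathcal{L}} + \bm{I}_N$ has minimum eigenvalue at least $1$ since $\bm{\mathcal{L}}$ is positive semi-definite. Then, applying \cref{lemma_GP_bound} with a union bound over the $N$ agents, the event $|f(x_i) - \hat{f}_i(x_i)| \le \hat{\eta}_i(x_i)$ for all $i \in \mathcal{V}$ holds with probability at least $1 - N\delta$, which requires the hypothesis $\delta < 1/N$ to be non-vacuous.

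On this event I would use the Lyapunov candidate $V = \tfrac{1}{2}\tilde{\bm{x}}^T\tilde{\bm{x}}$, whose derivative satisfies $\dot V \le \sum_i |\tilde{x}_i|\hat{\eta}_i(x_i) - c\|\tilde{\bm{x}}\|^2$. To control the cross term I would invoke \cref{property_trigger}: agents in cases 1 or 2 contribute $\hat{\eta}_i^2 \le \underline{\eta}^2$, while agents in case 3 satisfy $\hat{\eta}_i^2 \le c^2|\tilde{x}_i|^2 - (N-1)\underline{\eta}^2$. Summing these bounds, the penalty $-(N-1)\underline{\eta}^2$ collected per case-3 agent dominates the $\underline{\eta}^2$ contributions from the remaining agents, so either $\sum_i \hat{\eta}_i^2 \le N\underline{\eta}^2$ (when no agent is in case 3) or $\sum_i \hat{\eta}_i^2 \le c^2\|\tilde{\bm{x}}\|^2$ (otherwise). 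Via Cauchy-Schwarz this yields either $\dot V \le \sqrt{N}\underline{\eta}\|\tilde{\bm{x}}\| - c\|\tilde{\bm{x}}\|^2$ or $\dot V \le 0$, and a standard ultimate-boundedness argument then delivers the claim. The explicit factor $2c^{-1}N\underline{\eta}$ follows from combining the case-1 element-wise estimate $|\tilde{x}_i| \le (\sqrt{N-1}+1)\underline{\eta}/c$ with $\|\cdot\|_2 \le \sqrt{N}\|\cdot\|_\infty$ and the inequality $\sqrt{N}(\sqrt{N-1}+1) \le 2N$ for $N \ge 1$.

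The hard part is the aggregation step: the three cases of \cref{property_trigger} are defined per-agent and intertwine continuous state evolution with the discrete trigger logic, so compiling them into a single Lyapunov inequality requires careful bookkeeping to verify that case-3 agents (which can carry large $\hat{\eta}_i$ in the absence of recent updates) are always dominated by the consensus dissipation $c\tilde{\bm{x}}^T(\bm{\mathcal{L}} + \bm{I}_N)\tilde{\bm{x}}$. A secondary technical point is ensuring no Zeno-like pathology prevents the asymptotic $\limsup$ from being attained despite intermittent GP updates; the monotonicity of the posterior variance under data addition, as used in \eqref{eqn_sigma_after_update}, should preclude this by uniformly restoring $\hat{\eta}_i \le \underline{\eta}$ whenever a trigger fires.
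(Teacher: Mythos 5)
Your proposal is correct and follows essentially the same route as the paper: the triangle-inequality split into the auxiliary-consensus part (\cref{lemma_observer}) and the tracking error $\tilde{\bm{x}}$, the union bound giving probability $1-N\delta$, and a Lyapunov analysis of $\tilde{\bm{x}}^T\tilde{\bm{x}}$ that aggregates the three agent classes of \cref{property_trigger} exactly as in \cref{lemma_tracking}. The only difference is cosmetic bookkeeping in how the per-class bounds are summed to reach the threshold $2c^{-1}N\underline{\eta}$, so no further comment is needed.
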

	\vspace{0.2cm}
 
	The proof of \cref{theorem_consensus_error} is divided into two parts, namely the consensus of the auxiliary state \eqref{eqn_observer_dynamics} in \cref{lemma_observer} and the ultimate boundness of the consensus error in \cref{lemma_tracking}.
	
	\begin{lemma} \label{lemma_observer}
		Consider a multi-agent system with $N$ agents under a connected undirected graph $\mathcal{G}$, whose auxiliary dynamics is designed as in \eqref{eqn_observer_dynamics}, such that $\dot{\bar{\bm{x}}} = - \bar{c} \bm{\mathcal{L}} \bar{\bm{x}}$ with $\bar{\bm{x}} = [\bar{x}_1, \cdots, \bar{x}_N]^T$.
		Then, all the observed auxiliary states $\bar{x}_i$ for $\forall i \!\in\! \mathcal{V}$ converge to the average state $\bar{x}^*$ recalled from \cref{definition_consensus} as $\bar{x}^* \!=\! N^{\!-\!1} \bm{1}_N^T \bm{x}(0)$, i.e., $\lim_{t \to \infty} \bar{\bm{x}}(t) \!=\! \bm{1}_N \bar{x}^* / N$. \looseness=-1
	\end{lemma}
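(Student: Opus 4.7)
The plan is to exploit two classical spectral properties of the Laplacian $\bm{\mathcal{L}}$ of a connected undirected graph: (i) $\mathbf{1}_N$ is in the null space of $\bm{\mathcal{L}}$, and by symmetry $\mathbf{1}_N^T \bm{\mathcal{L}} = \mathbf{0}^T$; (ii) the nonzero eigenvalues satisfy $0 = \lambda_1 < \lambda_2 \le \cdots \le \lambda_N$, with algebraic connectivity $\lambda_2 > 0$. These together pin down both the conserved quantity and the contraction rate.

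First I would establish conservation of the aggregate state. Differentiating $\mathbf{1}_N^T \bar{\bm{x}}(t)$ along the vectorized dynamics gives $\tfrac{d}{dt}\mathbf{1}_N^T \bar{\bm{x}} = -\bar{c}\,\mathbf{1}_N^T \bm{\mathcal{L}}\bar{\bm{x}} = 0$. Combined with the initial condition $\bar{\bm{x}}(0) = \bm{x}(0)$ imposed in \eqref{eqn_observer_dynamics}, this yields $\mathbf{1}_N^T\bar{\bm{x}}(t) \equiv \mathbf{1}_N^T \bm{x}(0) = N\bar{x}^*$ for all $t \ge 0$, so the average of the auxiliary states is preserved and equals $\bar{x}^*$.

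Next I would introduce the disagreement vector $\bm{e}(t) := \bar{\bm{x}}(t) - \bar{x}^* \mathbf{1}_N$. By the conservation result, $\mathbf{1}_N^T \bm{e}(t) = 0$, so $\bm{e}(t)$ lies in the $(N\!-\!1)$-dimensional subspace $\mathbf{1}_N^\perp$ that is invariant under $\bm{\mathcal{L}}$. Since $\bm{\mathcal{L}}\mathbf{1}_N = \mathbf{0}$, it holds that $\dot{\bm{e}} = -\bar{c}\bm{\mathcal{L}}\bar{\bm{x}} = -\bar{c}\bm{\mathcal{L}}\bm{e}$. Choosing the Lyapunov candidate $V(\bm{e}) = \tfrac{1}{2}\bm{e}^T\bm{e}$, one obtains
\begin{align}
\dot{V} = -\bar{c}\,\bm{e}^T \bm{\mathcal{L}}\bm{e} \le -\bar{c}\lambda_2 \|\bm{e}\|^2 = -2\bar{c}\lambda_2 V,
\end{align}
where the inequality follows from the Rayleigh quotient bound applied on $\mathbf{1}_N^\perp$. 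Grönwall's inequality then gives $\|\bm{e}(t)\| \le \|\bm{e}(0)\| e^{-\bar{c}\lambda_2 t}$, hence $\lim_{t\to\infty} \bar{\bm{x}}(t) = \bar{x}^* \mathbf{1}_N$.

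This is a textbook consensus argument, so I do not anticipate a genuine obstacle; the only points requiring care are (a) using the initial condition $\bar{\bm{x}}(0)=\bm{x}(0)$ to identify the preserved sum with $N\bar{x}^*$, and (b) verifying that $\bm{e}(t)$ remains orthogonal to $\mathbf{1}_N$ so that invoking $\lambda_2$ (rather than $\lambda_1=0$) in the Rayleigh bound is legitimate. Both are immediate from the conservation identity derived above.
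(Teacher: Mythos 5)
Your proof is correct, but it takes a genuinely different route from the paper. The paper's argument is a LaSalle-type one: it takes $V_o = \bar{\bm{x}}^T\bar{\bm{x}}$, shows $\dot{V}_o = -2\bar{c}\,\bar{\bm{x}}^T\bm{\mathcal{L}}\bar{\bm{x}} \le 0$ is only negative \emph{semi}-definite, invokes LaSalle's invariance principle to conclude convergence to the largest invariant set inside $\ker\bm{\mathcal{L}} = \mathrm{span}\{\bm{1}_N\}$, and then uses the conservation identity $\bm{1}_N^T\dot{\bar{\bm{x}}}=0$ to single out $\bm{1}_N\bar{x}^*$ as the limit. You instead shift to the disagreement vector $\bm{e}=\bar{\bm{x}}-\bar{x}^*\bm{1}_N$, verify it stays in $\bm{1}_N^\perp$, and apply the Rayleigh-quotient bound with the algebraic connectivity $\lambda_2>0$ to get a strictly negative-definite $\dot V$ on that subspace. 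This buys you something the paper's sketch does not: an explicit exponential rate $\|\bm{e}(t)\|\le\|\bm{e}(0)\|e^{-\bar{c}\lambda_2 t}$, which avoids LaSalle entirely (natural, since the system is LTI) and directly substantiates the paper's later informal remark that a larger $\bar{c}$ yields faster convergence of the auxiliary states; the paper's route is marginally shorter and needs no spectral gap argument, only the structure of $\ker\bm{\mathcal{L}}$. One cosmetic note: the displayed limit in the lemma statement, $\lim_{t\to\infty}\bar{\bm{x}}(t)=\bm{1}_N\bar{x}^*/N$, carries a spurious factor $1/N$; your conclusion $\lim_{t\to\infty}\bar{\bm{x}}(t)=\bm{1}_N\bar{x}^*$ is the one consistent with the verbal statement of the lemma and with how it is used in the proof of \cref{theorem_consensus_error}.
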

	
	\textit{Proof sketch:}
	The proof of \cref{lemma_observer} follows \cite{godsil2001algebraic} by showing the negative semi-definite of $\dot{V}_o$ for the positive definite Lyapunov function $V_o = \bar{\bm{x}}^T \bar{\bm{x}}$.
	Combining with the LaSalle's invariance principle, the asymptotic stability is shown with the equilibrium point of $\bar{x}_i$ as $\bar{x}^*$ by considering $\bm{1}_N^T \dot{\bar{\bm{x}}} = 0$ for the connected undirected graph.
	\ProofEndSymbol
	
	\begin{remark}
		Considering the auxiliary state $\bar{x}_i$ is approaching the average state $\bar{x}^*$ due to \eqref{eqn_observer_dynamics}, $\bar{x}_i$ can be regarded as the estimated average state for each agent.
	\end{remark}
	
	Next, we prove the boundness of the consensus error related to $\tilde{\bm{x}} = [\tilde{x}_1, \cdots, \tilde{x}_N]^T$, which is shown as follows.
	
	\begin{lemma} \label{lemma_tracking}
		Let all assumptions in \cref{theorem_consensus_error} hold and choose $\delta \in (0, 1 / N)$, then it holds $\lim_{t \to \infty}\| \tilde{\bm{x}}(t) \| \le 2 c^{-1} N \underline{\eta}$ with probability of at least $1 - N \delta$.
	\end{lemma}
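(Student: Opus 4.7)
The plan is to run a quadratic Lyapunov analysis on the disagreement $\tilde{\bm{x}} = \bm{x} - \bar{\bm{x}}$, with the probabilistic guarantee coming from a union bound over the per-agent events of \cref{lemma_GP_bound}. First I substitute the controller \eqref{eqn_controller} into the dynamics \eqref{eqn_dynamics}; the terms $h(x_i)$, $g(x_i)u_i$, and $\dot{\bar{x}}_i$ cancel, leaving $\dot{\tilde{x}}_i = \Delta f_i - c\,r_i$ with $\Delta f_i := f(x_i) - \hat{\mu}_i(x_i)$. Writing the consensus term as $\bm{r} = (\bm{\mathcal{L}} + \bm{I})\tilde{\bm{x}}$, the vectorised error dynamics become $\dot{\tilde{\bm{x}}} = \Delta\bm{f} - c(\bm{\mathcal{L}} + \bm{I})\tilde{\bm{x}}$. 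Choosing $V = \tfrac{1}{2}\tilde{\bm{x}}^{\top}\tilde{\bm{x}}$ and using $\lambda_{\min}(\bm{\mathcal{L}} + \bm{I}) \ge 1$ (the Laplacian is PSD) yields the one-line bound $\dot{V} \le \|\tilde{\bm{x}}\|\,\|\Delta\bm{f}\| - c\|\tilde{\bm{x}}\|^{2}$, which is the inequality the remaining steps must control.

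Next I invoke \cref{lemma_GP_bound} for each agent, so that $|\Delta f_i| \le \hat{\eta}_i(x_i)$ holds with probability at least $1-\delta$; a union bound over $i = 1, \dots, N$ (enabled by $\delta < 1/N$) gives the joint event with probability at least $1-N\delta$, and the rest of the argument is conducted on this event. Partitioning the agents via \cref{property_trigger}, let $\mathcal{B}$ denote the set of agents for which Statement~3 applies (i.e.\ $c|\tilde{x}_i| > (\sqrt{N-1}+1)\underline{\eta}$ and the trigger has not fired), so that for $i \notin \mathcal{B}$ one has $\hat{\eta}_i \le \underline{\eta}$ while for $i \in \mathcal{B}$ one has $\hat{\eta}_i^{2} \le c^{2}\tilde{x}_i^{2} - (N-1)\underline{\eta}^{2}$. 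Summing across the partition yields $\|\Delta\bm{f}\|^{2} \le \sum_i \hat{\eta}_i^{2} \le N(1-|\mathcal{B}|)\underline{\eta}^{2} + c^{2}\sum_{i\in\mathcal{B}}\tilde{x}_i^{2}$; if $\mathcal{B} = \emptyset$ this collapses to $\|\Delta\bm{f}\| \le \sqrt{N}\,\underline{\eta}$, while if $|\mathcal{B}| \ge 1$ the constant term is non-positive and one obtains $\|\Delta\bm{f}\| \le c\|\tilde{\bm{x}}\|$.

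Feeding the two cases back into $\dot{V}$ gives $\dot{V} \le \sqrt{N}\,\underline{\eta}\|\tilde{\bm{x}}\| - c\|\tilde{\bm{x}}\|^{2}$ in the first case, which is strictly negative outside $\|\tilde{\bm{x}}\| \le \sqrt{N}\underline{\eta}/c$, and $\dot{V} \le 0$ in the second. Since $\sqrt{N}\underline{\eta}/c \le 2Nc^{-1}\underline{\eta}$, the sublevel set $\{\|\tilde{\bm{x}}\| \le 2Nc^{-1}\underline{\eta}\}$ is forward invariant and attracting, delivering $\limsup_{t\to\infty}\|\tilde{\bm{x}}(t)\| \le 2Nc^{-1}\underline{\eta}$ on the joint high-probability event, which is exactly the claim. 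The main obstacle I anticipate is the degenerate case $|\mathcal{B}| \ge 1$, in which $\dot{V}$ is only non-strictly negative, so trajectories might in principle slide along the locus $\|\Delta\bm{f}\| = c\|\tilde{\bm{x}}\|$ without $V$ decreasing. I would handle this either by a LaSalle-type argument showing that the largest invariant set inside $\{\dot{V}=0\}$ is contained in the target sublevel set, or by the observation that whenever $\|\tilde{\bm{x}}\|$ lingers above the bound the prediction error on at least one agent eventually trips the trigger, moving it out of $\mathcal{B}$ and restoring the strictly decreasing case.
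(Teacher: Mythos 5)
Your setup coincides with the paper's own proof: the same error dynamics $\dot{\tilde{\bm{x}}}=-c(\bm{\mathcal{L}}+\bm{I}_N)\tilde{\bm{x}}+\bm{f}(\bm{x})-\hat{\bm{\mu}}(\bm{x})$, the same quadratic Lyapunov function, the union bound over the $N$ per-agent events of \cref{lemma_GP_bound}, and a partition of the agents via \cref{property_trigger}. The genuine gap is in the final step, and you have located it yourself: in the case $|\mathcal{B}|\ge 1$ you only obtain $\dot V\le 0$, and neither of your proposed repairs is sound. LaSalle is not available here because $\{c\|\tilde{\bm{x}}\|=\|\Delta\bm{f}\|\}$ is the zero set of an \emph{upper bound} on $\dot V$, not of $\dot V$ itself, and the closed loop is a switched, non-autonomous system (the GP models change at trigger times), so the invariance principle does not apply. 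The claim that the trigger on some agent in $\mathcal{B}$ "eventually trips" contradicts the definition of $\mathcal{B}$: those are precisely the agents with $\rho(x_i,\bar{x}_i)\le 0$, i.e., whose prediction error bound is already dominated by $c|\tilde{x}_i|-\sqrt{N-1}\,\underline{\eta}$, and nothing forces $\rho$ to become positive while the state lingers there.

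The missing idea is to use the hypothesis $\|\tilde{\bm{x}}\|>2c^{-1}N\underline{\eta}$ \emph{inside} the case analysis rather than only in your first case, and to retain the contribution of the agents outside $\mathcal{B}$ when lower-bounding $c^2\|\tilde{\bm{x}}\|^2$ (you discard it via $c^2\sum_{i\in\mathcal{B}}\tilde{x}_i^2\le c^2\|\tilde{\bm{x}}\|^2$). The paper splits $\mathcal{V}\setminus\mathcal{B}$ further into $\mathbb{S}_1$ (small $|\tilde{x}_i|$) and $\mathbb{S}_2$ (large $|\tilde{x}_i|$ with the trigger fired), uses $c^2|\tilde{x}_i|^2>N\underline{\eta}^2$ for $i\in\mathbb{S}_2$ and $c^2|\tilde{x}_i|^2\ge\eta_i^2(x_i)+(N-1)\underline{\eta}^2$ for $i\in\mathbb{S}_3=\mathcal{B}$, so that the terms $\sum_{i\in\mathbb{S}_3}(\eta_i^2(x_i)-\underline{\eta}^2)$ cancel between the lower bound on $c^2\|\tilde{\bm{x}}\|^2$ and the upper bound on $\|\hat{\bm{\eta}}(\bm{x})\|^2$, leaving
\begin{align*}
c^2\|\tilde{\bm{x}}\|^2-\|\hat{\bm{\eta}}(\bm{x})\|^2 > \big(|\mathbb{S}_2\cup\mathbb{S}_3|-1\big)N\underline{\eta}^2 .
\end{align*}
A pigeonhole argument then shows that $\|\tilde{\bm{x}}\|>2c^{-1}N\underline{\eta}$ forces at least one agent to satisfy $c|\tilde{x}_i|>2\sqrt{N}\underline{\eta}\ge(\sqrt{N-1}+1)\underline{\eta}$, hence $|\mathbb{S}_2\cup\mathbb{S}_3|\ge 1$ and $\dot V<0$ in every case, including your problematic one. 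Without this step your argument shows only that $\|\tilde{\bm{x}}\|$ is non-increasing in the degenerate configuration, not that it enters the ball of radius $2c^{-1}N\underline{\eta}$, so the stated ultimate bound does not follow.
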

	\begin{proof}
		Apply the control law \eqref{eqn_controller} with \eqref{eqn_observer_dynamics} into \eqref{eqn_dynamics}, the concatenated dynamics of the controlled system denotes
		\begin{align} \label{eqn_dot_tilde_x}
			\dot{\tilde{\bm{x}}} = \dot{\bm{x}} - \dot{\bar{\bm{x}}} = - c (\bm{\mathcal{L}} + \bm{I}_N) \tilde{\bm{x}} + \bm{f}(\bm{x}) - \hat{\bm{\mu}}(\bm{x}),
		\end{align}
		where the concatenated unknown function and its prediction denote $\bm{f}(\bm{x}) = [f(x_1), \cdots, f(x_N)]^T$ and $\hat{\bm{\mu}}(\bm{x}) = [\hat{\mu}_1(x_1), \cdots, \hat{\mu}_N(x_N)]^T$ respectively with $\hat{\mu}_i(\cdot)$ defined in \cref{subsection_event_trigger}.
		Consider the controlled system with event-triggered learning as a switching system, and then the control performance is proven using the common Lyapunov theory.
		Choose the common Lyapunov candidate $V_c$ as
		\begin{align}
			V_c = \tilde{\bm{x}}^T \tilde{\bm{x}},
		\end{align}
		whose time derivative using \eqref{eqn_dot_tilde_x} is written as
		\begin{align}
			\dot{V}_c = - 2 c \tilde{\bm{x}}^T (\bm{\mathcal{L}} + \bm{I}_N) \tilde{\bm{x}} + 2 \tilde{\bm{x}}^T (\bm{f}(\bm{x}) - \hat{\bm{\mu}}(\bm{x})).
		\end{align}
		Consider the fact that $\underline{\lambda}(\bm{\mathcal{L}} + \bm{I}_N) = 1$ \cite{godsil2001algebraic}, where $\underline{\lambda}(\cdot)$ returns the minimal eigenvalue of the matrix.
		Then, the derivative $\dot{V}_c$ is bounded by
		\begin{align} \label{eqn_dotV}
			\dot{V}_c &\le -2 c \| \tilde{\bm{x}} \|^2 + 2 \| \tilde{\bm{x}} \| \| \bm{f}(\bm{x}) - \hat{\bm{\mu}}(\bm{x}) \| \\
			&\le -2 \| \tilde{\bm{x}} \| ( c \| \tilde{\bm{x}} \| - \| \hat{\bm{\eta}}(\bm{x}) \|) \nonumber
		\end{align}
		with a probability of at least $1 - N \delta$ by using the union bound, where $\hat{\bm{\eta}}(\bm{x}) = [\hat{\eta}_1(x_1), \cdots, \hat{\eta}_N(x_n)]^T$.
		Note that the negativity of $\dot{V}_c$ is achieved when
		\begin{align}
			c \| \tilde{\bm{x}} \| > \| \hat{\bm{\eta}}(\bm{x}) \| ~ \Leftrightarrow ~ c^2 \| \tilde{\bm{x}} \|^2 > \| \hat{\bm{\eta}}(\bm{x}) \|^2.
		\end{align}
		Next, according to the value of $\tilde{x}_i$ and $\rho(x_i, \bar{x}_i)$, the agents are divided into $3$ sets defined as
		\begin{align}
			&\mathbb{S}_1 \!=\! \{ i \!\in\! \mathcal{V} \!:\! c |\tilde{x}_i| \!\le\! (\sqrt{N \!-\! 1} \!+\! 1) \underline{\eta} \}, \\
			&\mathbb{S}_2 \!=\! \{ i \!\in\! \mathcal{V} \!:\! c |\tilde{x}_i| \!>\! (\sqrt{N \!-\! 1} \!+\! 1) \underline{\eta} ~\wedge~ \rho(x_i, \bar{x}_i) \!>\! 0 \}, \\
			&\mathbb{S}_3 \!=\! \{ i \!\in\! \mathcal{V} \!:\! c |\tilde{x}_i| \!>\! (\sqrt{N \!-\! 1} \!+\! 1) \underline{\eta} ~\wedge~ \rho(x_i, \bar{x}_i) \!\le\! 0 \}.
		\end{align}
		satisfying $\bigcup_{i = 1,2,3} \mathbb{S}_i = \mathcal{V}$ and $\mathbb{S}_i \cap \mathbb{S}_j = \emptyset$ for $i,j = 1, 2, 3$ and $i \ne j$.
		% Using the results in \cref{property_trigger}, the value of $\| \hat{\bm{\eta}}(\bm{x}) \|^2$ is written as
  %       \begin{align} \label{eqn_eta_bound}
		% 	\| \hat{\bm{\eta}}(\bm{x}) \|^2 &= \sum_{j = 1}^3 \sum_{i \in \mathbb{S}_j} \hat{\eta}_i^2(x_i) \le \sum_{i \in \mathbb{S}_1 \cup \mathbb{S}_2 } \underline{\eta}^2 + \sum_{i \in \mathbb{S}_3} \hat{\eta}_i^2(x_i) \nonumber \\
		% 	&= N \underline{\eta}^2 + \sum_{i \in \mathbb{S}_3} \Big( \hat{\eta}_i^2(x_i) - \underline{\eta}^2 \Big).
		% \end{align}
		% Similarly, the value of $\| \tilde{\bm{x}} \|^2$ is bounded as
		% \begin{align} \label{eqn_c_tilde_x_bound}
		% 	c^2 \| \tilde{\bm{x}} \|^2 &= c^2 \sum_{j = 1}^3 \sum_{i \in \mathbb{S}_j} |\tilde{x}_i|^2 \ge \sum_{i \in \mathbb{S}_2} c^2 |\tilde{x}_i|^2 + \sum_{i \in \mathbb{S}_3} c^2 |\tilde{x}_i|^2 \nonumber \\
		% 	&> \sum_{i \in \mathbb{S}_2} N \underline{\eta}^2 + \sum_{i \in \mathbb{S}_3} \Big( \hat{\eta}_i^2(x_i) + (N - 1) \underline{\eta}^2 \Big) \\
		% 	&= | \mathbb{S}_2 \cup \mathbb{S}_3 | N \underline{\eta}^2 + \sum_{i \in \mathbb{S}_3} \Big( \eta_i^2(x_i) - \underline{\eta}^2 \Big). \nonumber
		% \end{align}
        Considering $\sqrt{N} - \sqrt{N-1}$ is monotonically decreasing with respect to $N \in \mathbb{N}_+$, it has $\sqrt{N} - \sqrt{N-1} \le \sqrt{1} - \sqrt{1-1} = 1$ indicating $\sqrt{N} \le \sqrt{N-1} + 1$.
        This also means when $i \in \mathbb{S}_2 \cup \mathbb{S}_3$, it is easy to see $c |\tilde{x}_i| > \sqrt{N} \underline{\eta}$.
        Using the results in \cref{property_trigger}, the value of $\| \tilde{\bm{x}} \|^2$ is bounded as
        \begin{align} \label{eqn_c_tilde_x_bound}
			c^2 \| \tilde{\bm{x}} \|^2 &= c^2 \sum_{j = 1}^3 \sum_{i \in \mathbb{S}_j} |\tilde{x}_i|^2 \ge \sum_{i \in \mathbb{S}_2} c^2 |\tilde{x}_i|^2 + \sum_{i \in \mathbb{S}_3} c^2 |\tilde{x}_i|^2 \nonumber \\
			&> \sum_{i \in \mathbb{S}_2} N \underline{\eta}^2 + \sum_{i \in \mathbb{S}_3} \Big( \hat{\eta}_i^2(x_i) + (N - 1) \underline{\eta}^2 \Big) \\
			&= | \mathbb{S}_2 \cup \mathbb{S}_3 | N \underline{\eta}^2 + \sum_{i \in \mathbb{S}_3} \Big( \eta_i^2(x_i) - \underline{\eta}^2 \Big). \nonumber
		\end{align}
		Similarly, the value of $\| \hat{\bm{\eta}}(\bm{x}) \|^2$ is written as
        \begin{align} \label{eqn_eta_bound}
			\| \hat{\bm{\eta}}(\bm{x}) \|^2 &= \sum_{j = 1}^3 \sum_{i \in \mathbb{S}_j} \hat{\eta}_i^2(x_i) \le \sum_{i \in \mathbb{S}_1 \cup \mathbb{S}_2 } \underline{\eta}^2 + \sum_{i \in \mathbb{S}_3} \hat{\eta}_i^2(x_i) \nonumber \\
			&= N \underline{\eta}^2 + \sum_{i \in \mathbb{S}_3} \Big( \hat{\eta}_i^2(x_i) - \underline{\eta}^2 \Big).
		\end{align}
		Apply the inequalities \eqref{eqn_c_tilde_x_bound} and \eqref{eqn_eta_bound} into \eqref{eqn_dotV}, and then due to the positive $c$, \eqref{eqn_dotV} is bounded by
		\begin{align} \label{eqn_dotV_2}
			\dot{V}_c &\le -\frac{2 \| \tilde{\bm{x}} \|}{c \| \tilde{\bm{x}} \| + \| \hat{\bm{\eta}}(\bm{x}) \| } ( c^2 \| \tilde{\bm{x}} \|^2 - \| \hat{\bm{\eta}}(\bm{x}) \|^2) \nonumber \\
			&< -\frac{2 \| \tilde{\bm{x}} \|}{c \| \tilde{\bm{x}} \| + \| \hat{\bm{\eta}}(\bm{x}) \| } ( | \mathbb{S}_2 \cup \mathbb{S}_3 | N \underline{\eta}^2 - N \underline{\eta}^2 ).
		\end{align}
		Now, we consider the case with $\| \tilde{\bm{x}} \| > 2 c^{-1} N \underline{\eta}$.
		This indicates at least one agent satisfying $|\tilde{x}_i|  > 2 c^{-1} \sqrt{N} \underline{\eta}$, which is proven by considering the contradiction with all agents having $c |\tilde{x}_i| \le 2 \sqrt{N} \underline{\eta}$ and
		\begin{align}
			\| \tilde{\bm{x}} \|^2 \le \sum_{i \in \mathcal{V}} ~ 4 c^{-2} N \underline{\eta}^2 = (2 c^{-1} \sqrt{N} \underline{\eta})^2.
		\end{align}
        Note that $2 \sqrt{N} \ge \sqrt{N - 1} + 1$ for $N \in \mathbb{N}$, therefore at least one agent belongs to $\mathbb{S}_2 \cup \mathbb{S}_3$.
        This also means when $\| \tilde{\bm{x}} \| > c^{-1} N \underline{\eta}$ the inequality $| \mathbb{S}_2 \cup \mathbb{S}_3 | \ge 1$ holds, which is then applied in \eqref{eqn_dotV_2} such that $\dot{V}_c$ is bounded as
		\begin{align}
			\dot{V}_c < -\frac{2 \| \tilde{\bm{x}} \| N \underline{\eta}^2}{c \| \tilde{\bm{x}} \| + \| \hat{\bm{\eta}}(\bm{x}) \| } ( | \mathbb{S}_2 \cup \mathbb{S}_3 |  - 1 ) \le 0.
		\end{align}
		This results in the decay of Lyapunov function $V$ when $\| \tilde{\bm{x}} \| > 2 c^{-1} N \underline{\eta}$, indicating $2 c^{-1} N \underline{\eta}$ is the ultimate bound for $\| \tilde{\bm{x}} \|$, which concludes the proof.

		% This also means $| \mathbb{S}_2 \cup \mathbb{S}_3 | \ge 1$ when $\| \tilde{\bm{x}} \| > c^{-1} N \underline{\eta}$, and then the difference between $c^2 \| \tilde{\bm{x}} \|^2$ and $\| \hat{\bm{\eta}}(\bm{x}) \|^2$ is written as
		% \begin{align}
			% 	c^2 \| \tilde{\bm{x}} \|^2 - \| \hat{\bm{\eta}}(\bm{x}) \|^2 > (| \mathbb{S}_2 \cup \mathbb{S}_3 | - 1) N \underline{\eta}^2 \ge 0,
			% \end{align}
		% resulting in $c^2 \| \tilde{\bm{x}} \|^2 > \| \hat{\bm{\eta}}(\bm{x}) \|^2$ and equivalent to $c \| \tilde{\bm{x}} \| > \| \hat{\bm{\eta}}(\bm{x}) \|$.
		% With this result, the negativity of $\dot{V}_c$ is shown for $\| \tilde{\bm{x}} \| > c^{-1} N \underline{\eta}$, indicating $c^{-1} N \underline{\eta}$ is the ultimate bound for $\| \tilde{\bm{x}} \|$, which concludes the proof.
	\end{proof}
	
	\begin{remark} \label{remark_instability_conventional_control_law}
		The proof for \cref{lemma_tracking} also indicates the conventional control law \eqref{eqn_conventional_control_law} cannot achieve average consensus under non-vanishing prediction error $\bm{f}(\bm{x}) - \hat{\bm{\mu}}(\bm{x})$, since the minimal eigenvalue of $\bm{\mathcal{L}}$ is zero inducing infinite large $\epsilon$ by considering $\epsilon = (c \underline{\lambda} (\bm{\mathcal{L}}))^{-1} N \underline{\eta} \to \infty$ from \eqref{eqn_dotV} with $\bm{\mathcal{L}} + \bm{I}_N$ replaced by $\bm{\mathcal{L}}$.
	\end{remark}

	Apply the intermediate result from \cref{lemma_observer} and \cref{lemma_tracking}, the proof of \cref{theorem_consensus_error} is ready.
	
	\textit{Proof of \cref{theorem_consensus_error}:}
	Using the triangular inequality, the average consensus error $\| \bm{x} \!-\! \bm{1}_N \bar{x}^* \|$ is ultimately bounded by \looseness=-1
	\begin{align}
		\lim_{t \to \infty} \| \bm{x} - \bm{1}_N \bar{x}^* \| &\le \lim_{t \to \infty} \| \tilde{\bm{x}} \| + \lim_{t \to \infty} \| \bar{\bm{x}} - \bm{1}_N \bar{x}^* \| \nonumber \\
		&= \lim_{t \to \infty} \| \tilde{\bm{x}} \| \le 2 c^{-1} N \underline{\eta} = \epsilon,
	\end{align}
	inherited the probability of at least $1 - N \delta$ from \cref{lemma_tracking} by considering the probability of \cref{lemma_observer} as $1$.
	\ProofEndSymbol
	
	\cref{theorem_consensus_error} shows the ultimate boundness of the consensus error with high probability, where the consensus error bound $\epsilon$ is increasing with the number of agents $N$.
	This relationship is intuitive, because in the worst case the prediction errors of the individual Gaussian process model on each body are superimposed, leading to a larger prediction error $\hat{\bm{\eta}}(\bm{x})$ with more agents.
	Moreover, the error $\epsilon$ is also related to the quality of the collected data reflected by $\underline{\eta}$, which is proportional to the variance of measurement noise $\sigma_n$ considering \cref{lemma_GP_bound} and \eqref{eqn_sigma_after_update}.
	While using more accurate sensors or processing methods reduces measurement noise, the variance $\sigma_n$ will not vanish in practice.
	To achieve arbitrary small consensus error $\epsilon$, a high gain $c$ is necessary.
	The size of the ultimate bound $\epsilon$ only depends on the control gain $c$, while the auxiliary gain $\bar{c}$ determines the convergence speed \cite{godsil2001algebraic}. 
	This means larger $\bar{c}$ results in faster convergence, which is obtained by considering the convergence speed of the auxiliary state in \cref{lemma_observer} with \cite{godsil2001algebraic} and the fact that \cref{lemma_tracking} only considers $\tilde{x}_i = x_i - \bar{x}_i$.
    The decoupled effects of $c$ and $\bar{c}$ provide us more flexibility to design the controller for desired performance in the steady state, i.e., for smaller ultimate average consensus errors, and the transition phase, i.e., for faster convergence, separately.
	
	\begin{remark} \label{remark_domain_hold}
		Note that only with $x_i(t) \in \mathbb{X}, \forall t \in \mathbb{R}_{0,+}$, the boundness of the prediction error in \cref{lemma_GP_bound} and thus in \cref{theorem_consensus_error} holds.
		Moreover, it is easily to prove the trajectories to $\bar{x}^*$ for each agent are in $\mathbb{X}$ due to the convex domain of $\mathbb{X}$ and $x_(0) \in \mathbb{X}$ for $\forall i \in \mathcal{V}$.
		Therefore, \cref{theorem_consensus_error} holds with safety guarantee for the controlled system, if the neighbor domain of trajectories to $\bar{x}^*$ is a subset of $\mathbb{X}$.
		This can be easily realized by restraining the initial condition $x_i(0)$ to a smaller subset $\mathbb{X}_0 \subset \mathbb{X}$ such that the domain of $x(\infty)$, i.e., $ \{ x \in \mathbb{R}: |x - \bar{x}^* | \le \epsilon \}$, is a subset of $\mathbb{X}$ with $\epsilon = 2 c^{-1} N \bar{\eta}$ defined in \cref{theorem_consensus_error} and $\bar{\eta} = \max_{i \in \mathcal{V}, x_i \in \mathbb{X}} ~ \eta_i(x_i)$ from \cite{yangDistributedLearningConsensus2021}.
	\end{remark}
	
	% \begin{remark}
	% 	The tracking error bound $\epsilon$ is only related to control gain $c$ but not $\bar{c}$ for auxiliary dynamics.
	% 	The value of $\bar{c}$ determines the convergence speed of the auxiliary states \cite{godsil2001algebraic}, which then affects the convergence speed of the average consensus error.
	% 	The decoupled effects of $c$ and $\bar{c}$ provide us more flexibility to design the controller for desired performance in the steady state, i.e., for smaller ultimate average consensus errors, and the transition phase, i.e., for faster convergence, separately.
	% \end{remark}
	
	\begin{remark}
		With the definition of $\epsilon$ in \cref{theorem_consensus_error}, the trigger condition in \eqref{eqn_trigger_function} can be relaxed as
		\begin{align}
			\eta_i(x_i) > c^{-1} \max \left\{ |\tilde{x}_i| - \epsilon / \sqrt{N}, 0 \right\} + \underline{\eta},
		\end{align}
		which is composed of individual consensus error $\tilde{x}_i$, best prediction performance $\underline{\eta}$ and overall consensus error bound $\epsilon$.
		This formulation indicates a general event-trigger design for online learning in multi-agent system, namely comparison of the current prediction error $\eta_i(x_i)$, the deviation of the control error to its bound $|\tilde{x}_i| - \epsilon / \sqrt{N}$, and the guaranteed best prediction performance $\underline{\eta}$, which is also suitable for cooperative learning with different hyperparameters for each Gaussian process model as in \cite{yangDistributedLearningConsensus2021}.
	\end{remark}
	
	%%%%%%%%%%%%%%%%%%%%%%%%%%%%%%%%%%%%%%%%%%%%%%%%%%%%%%%%%%%%%%%%%%%%%%%%%%%%%%%%
	\section{Numerical Simulations} \label{section_simulation}
	
	\subsection{Simulation Setting} \label{subsection_simulation_setting}
	To evaluate the designed control law \eqref{eqn_controller}, \eqref{eqn_observer_dynamics} with event-triggered online learning with \eqref{eqn_general_trigger_formulation}, \eqref{eqn_trigger_function}, we consider a multi-agent system with $N = 4$ agents under dynamics \eqref{eqn_dynamics} with $h(x_i) = 0$, $g(x_i) = 1$ satisfying \cref{assumption_g} and
	\begin{align}
		f(x_i) = \sin(10 x_i) + \frac{1}{2 \exp(-x_i / 10)} + 5
	\end{align}
	for each agent $i \in \mathcal{V} = \{ 1, \cdots, 4 \}$ with the compact domain $\mathbb{X} = [-1.5, 1.5]$.
	The communication topology among the agents is defined by a connected undirected graph $\mathcal{G}$ with the edge set $\mathcal{E} = \{ (1,2), (2,3), (3,4), (4,1) \}$.
	To infer the unknown function $f(\cdot)$, each agent $i \in \mathcal{V}$ employs a Gaussian process model using the kernel function satisfying \cref{assumption_GP} as
    \begin{align}
        \kappa(x, x') = \sigma_f^2 \exp \left( - \frac{|x - x'|^2}{2 l^2} \right)
    \end{align}
    with $\sigma_f = 1$, $l = 0.05$ and the individual data set $\mathbb{D}_i$ satisfying \cref{assumption_dataset} with the variance of the measurement noise as $\sigma_n = 0.01$.
	The simulation time is set as $10$.
	
	To show the effectiveness of the proposed controller with event-triggered online learning, the following cases with different controllers and learning strategies are compared:\looseness=-1
	\begin{enumerate} %[label=(\alph*)]
		\item 
		Use the conventional distributed control law \eqref{eqn_conventional_control_law} for average consensus similarly as in \cite{godsil2001algebraic} with merely offline learning;
		\item 
		Use the conventional distributed control law \eqref{eqn_conventional_control_law} with event-triggered online learning using naive trigger function $\rho_{\text{naive}}(\cdot)$ for all $i \in \mathcal{V}$ with
		\begin{align} \label{eqn_naive_trigger_function}
			\rho_{\text{naive}}(x_i, \bar{x}_i) = \eta_i(x_i) - \underline{\eta};
		\end{align}
		\item 
		Use the proposed controller \eqref{eqn_controller} and \eqref{eqn_observer_dynamics} with merely offline learning;
		\item 
		Use the proposed controller \eqref{eqn_controller} and \eqref{eqn_observer_dynamics} with event-triggered online learning using \eqref{eqn_general_trigger_formulation} and \eqref{eqn_trigger_function};
	\end{enumerate}
	Note that the control gains $c$ in \eqref{eqn_controller} and \eqref{eqn_conventional_control_law} are set as $c = 1$, and the auxiliary gain $\bar{c} = 1$ for \eqref{eqn_observer_dynamics}.
	Moreover, for online learning, the data sets $\mathbb{D}_i$ are initialized with $0$ samples.
	For the offline learning cases, each agent employs an individual data set $\mathbb{D}_i$ including $150$ training samples evenly distributed in the compact domain $\mathbb{X}$.
	Note that the number $150$ for the size of the offline data set is chosen, such that the number of overall collected data using event-triggered online learning in cases (b) and (d) is below $150$ with high probability, as shown in \cref{figure_DataQuantity}.

	\subsection{Performance among Difference Controllers}
	
	First, we show the performance for each cases with $\bm{x}(0) = [-0.52, 0.15,-0.06, -0.71]^T$ resulting in $\bar{x}^* = -0.285$.

	\begin{figure}[t] 
		\centering
		\vspace{0.3cm}
		\begin{tikzpicture}
			\def\file{fig/State.txt}
			\begin{axis}[xlabel={(a)},ylabel={State},
				xmin=0.01, ymin = -0.79, xmax = 9.99,ymax=0.19,legend columns=4,
				width=0.285\textwidth,height=3.5cm,legend pos= south east,
				xlabel shift=-0.2cm,
				xticklabels={,,,}
				]
				\addplot[tab10_blue, thick]	table[x = t_set_ff , y  = x_set_ff_1 ]{\file};
				\addplot[tab10_orange, thick]	table[x = t_set_ff , y  = x_set_ff_2 ]{\file};
				\addplot[tab10_green, thick]	table[x = t_set_ff , y  = x_set_ff_3 ]{\file};
				\addplot[tab10_red, thick]	table[x = t_set_ff , y  = x_set_ff_4 ]{\file};
				\addplot[black, dashed]	table[x = t_set_equilibrium , y  = x_bar_star ]{\file};
			\end{axis}
			\begin{axis}[xlabel={(b)},ylabel={},
				xmin=0.01, ymin = -0.79, xmax = 9.99,ymax=0.19,legend columns=5,
				width=0.285\textwidth,height=3.5cm,
				legend style={at={(0.2,1.05)},anchor=south},
				xlabel shift=-0.2cm,
				xshift=3.6cm,
				xticklabels={,,,}, yticklabels={,,,}
				]
				\addplot[tab10_blue, thick]	table[x = t_set_ft , y  = x_set_ft_1 ]{\file};
				\addplot[tab10_orange, thick]	table[x = t_set_ft , y  = x_set_ft_2 ]{\file};
				\addplot[tab10_green, thick]	table[x = t_set_ft , y  = x_set_ft_3 ]{\file};
				\addplot[tab10_red, thick]	table[x = t_set_ft , y  = x_set_ft_4 ]{\file};
				\addplot[black, dashed]	table[x = t_set_equilibrium , y  = x_bar_star ]{\file};
				\legend{$x_1$ ,$x_2$, $x_3$, $x_4$, $\bar{x}^*$}
			\end{axis}
			\begin{axis}[xlabel={(c) {\color{white} 1} $\!\!t$},ylabel={State},
				xmin=0.01, ymin = -0.79, xmax = 9.99,ymax=0.19,legend columns=4,
				width=0.285\textwidth,height=3.5cm,legend pos= south east,
				xlabel style={text width=0.4cm},
				xshift=0cm, yshift=-2.6cm,
				]
				\addplot[tab10_blue, thick]	table[x = t_set_tf , y  = x_set_tf_1 ]{\file};
				\addplot[tab10_orange, thick]	table[x = t_set_tf , y  = x_set_tf_2 ]{\file};
				\addplot[tab10_green, thick]	table[x = t_set_tf , y  = x_set_tf_3 ]{\file};
				\addplot[tab10_red, thick]	table[x = t_set_tf , y  = x_set_tf_4 ]{\file};
				\addplot[black, dashed]	table[x = t_set_equilibrium , y  = x_bar_star ]{\file};
			\end{axis} 
			\begin{axis}[xlabel={(d) {\color{white} 1} $\!\!t$},ylabel={},
				xmin=0.01, ymin = -0.79, xmax = 9.99,ymax=0.19,legend columns=3,
				width=0.285\textwidth,height=3.5cm,legend pos= south east,
				xlabel style={text width=0.4cm},
				xshift=3.6cm, yshift=-2.6cm,
				yticklabels={,,,}
				]
				\addplot[tab10_blue, thick]	table[x = t_set_tt , y  = x_set_tt_1 ]{\file};
				\addplot[tab10_orange, thick]	table[x = t_set_tt , y  = x_set_tt_2 ]{\file};
				\addplot[tab10_green, thick]	table[x = t_set_tt , y  = x_set_tt_3 ]{\file};
				\addplot[tab10_red, thick]	table[x = t_set_tt , y  = x_set_tt_4 ]{\file};
				\addplot[black, dashed]	table[x = t_set_equilibrium , y  = x_bar_star ]{\file};
			\end{axis}  
		\end{tikzpicture}
        \vspace{-0.3cm}
		\caption{
			The average state $\bar{x}^*$ and the actual state $x_i$ for each agent over time using different controllers and learning strategies, which are corresponding to the $4$ cases in \cref{subsection_simulation_setting}.
		}
        \vspace{-0.3cm}
		\label{figure_State}
	\end{figure}
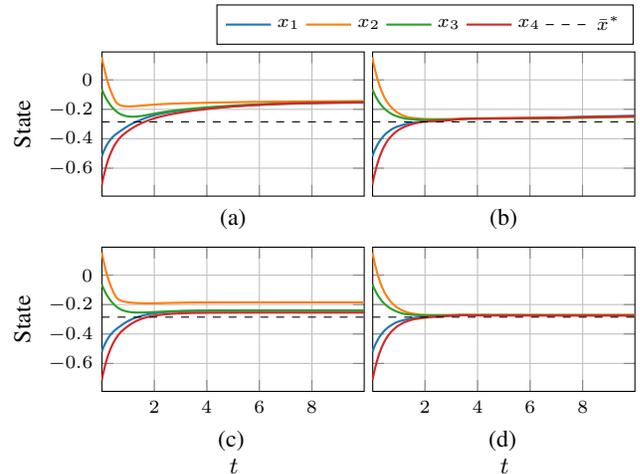
	
	The state evolution for each agent over time is shown in \cref{figure_State}.
	The performance by using the conventional controller \eqref{eqn_conventional_control_law} is shown in subfigures (a) and (b), where the states almost achieve consensus with some errors but do not converge to the average states $\bar{x}^*$.
	The bias between the actual converge state $\bar{x}_{\infty} = \lim_{t \to \infty} \bm{1}_N^T \bm{x}(t)$ and the average states $\bar{x}^*$ can hardly be quantified.
	Therefore, it is not suitable for safety critical scenarios, since the actual converge state $\bar{x}_{\infty}$ may not in the pre-defined domain $\mathbb{X}$.
	Moreover, it is intuitive that with online learning the actual converge state is closer to the average states, because of the smaller prediction error compared to the offline learning.
	With the proposed control using \eqref{eqn_controller} and \eqref{eqn_observer_dynamics}, the bias between $\bar{x}_{\infty}$ and $\bar{x}^*$ is guaranteed to be reduced and shown in subfigures (c) and (d) in \cref{figure_State}, guarantee the effectiveness of \cref{theorem_consensus_error} as in \cref{remark_domain_hold}.
	While the states $x_i$ tend to close $\bar{x}^*$ in (c), the bias to $\bar{x}^*$, i.e., $| \bar{x}_{\infty} - \bar{x}^* |$, is obvious due to the poor prediction under offline learning.
	Additionally, with online learning as in (d), the controlled system achieves a smaller average consensus error compared to case (c).

	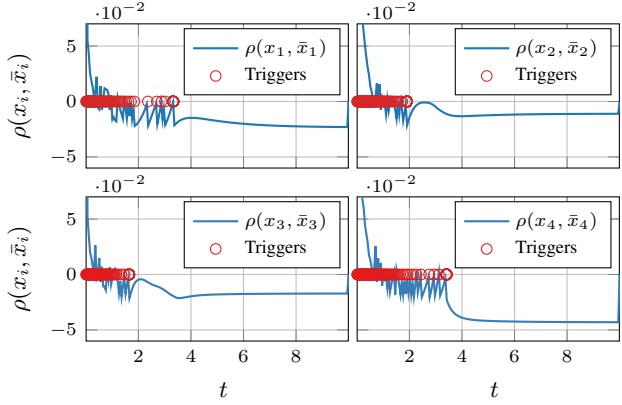
\begin{figure}[t] 
		\centering
        \vspace{0.3cm}
		\begin{tikzpicture}
			\def\file{fig/TriggerRho.txt}
			\begin{axis}[
				xlabel={},ylabel={$\rho(x_i,\bar{x}_i)$},
				xmin=0.01, ymin = -0.06, xmax = 9.99,ymax=0.07,legend columns=1,
				width=0.285\textwidth,height=3.5cm,legend pos= north east,
				xlabel shift=-0.2cm,
				xticklabels={,,,}
				]
				\addplot[tab10_blue, thick]						table[x = t_set_tt , y  = rho_set_plot_1 ]{\file};
				% \addplot[black, dashed]				table[x = t_set_tt , y  = rho_bar ]{\file};
				\addplot[only marks, mark=o, tab10_red]	table[x = t_set_tt_triggerPlot_1 , y  = Trigger_rho_set_triggerPlot_1 ]{\file};
				\legend{{$\rho(x_1, \bar{x}_1)$} , Triggers}
			\end{axis}
			\begin{axis}[
				xlabel={},ylabel={},
				xmin=0.01, ymin = -0.06, xmax = 9.99,ymax=0.07,legend columns=1,
				width=0.285\textwidth,height=3.5cm,legend pos= north east,
				xlabel shift=-0.2cm,
				xshift=3.6cm,
				xticklabels={,,,}, yticklabels={,,,}
				]
				\addplot[tab10_blue, thick]						table[x = t_set_tt , y  = rho_set_plot_2 ]{\file};
				% \addplot[black, dashed]				table[x = t_set_tt , y  = rho_bar ]{\file};
				\addplot[only marks, mark=o, tab10_red]	table[x = t_set_tt_triggerPlot_2 , y  = Trigger_rho_set_triggerPlot_2 ]{\file};
				\legend{{$\rho(x_2, \bar{x}_2)$} , Triggers}
			\end{axis}
			\begin{axis}[
				xlabel={{\color{white} 1111}$t$},ylabel={$\rho(x_i,\bar{x}_i)$},
				xmin=0.01, ymin = -0.06, xmax = 9.99,ymax=0.07,legend columns=1,
				width=0.285\textwidth,height=3.5cm,legend pos= north east,
				xlabel style={text width=1.2cm},
				xshift=0cm, yshift=-2.3cm,
				]
				\addplot[set_blue, thick]						table[x = t_set_tt , y  = rho_set_plot_3 ]{\file};
				% \addplot[black, dashed]				table[x = t_set_tt , y  = rho_bar ]{\file};
				\addplot[only marks, mark=o, set_red]	table[x = t_set_tt_triggerPlot_3 , y  = Trigger_rho_set_triggerPlot_3 ]{\file};
				\legend{{$\rho(x_3, \bar{x}_3)$} , Triggers}
			\end{axis} 
			\begin{axis}[
				xlabel={{\color{white} 1111}$t$},ylabel={},
				xmin=0.01, ymin = -0.06, xmax = 9.99,ymax=0.07,legend columns=1,
				width=0.285\textwidth,height=3.5cm,legend pos= north east,
				xlabel style={text width=1.2cm},
				xshift=3.6cm, yshift=-2.3cm,
				yticklabels={,,,}
				]
				\addplot[set_blue, thick]						table[x = t_set_tt , y  = rho_set_plot_4 ]{\file};
				% \addplot[black, dashed]				table[x = t_set_tt , y  = rho_bar ]{\file};
				\addplot[only marks, mark=o, set_red]	table[x = t_set_tt_triggerPlot_4 , y  = Trigger_rho_set_triggerPlot_4 ]{\file};
				\legend{{$\rho(x_4, \bar{x}_4)$} , Triggers}
			\end{axis}  
		\end{tikzpicture}
        \vspace{-0.3cm}
		\caption{
			The value of trigger function $\rho(x_i, \bar{x}_i)$ and the time of the trigger events for each agent $i \in \mathcal{V}$ for case (d).
		}
        \vspace{-0.3cm}
		\label{figure_Trigger}
	\end{figure}
	
	\cref{figure_Trigger} shows the event times for data collection for online learning with the values of trigger functions $\rho$ in each agent.
	It is obvious that most data collections occur in the transition status as shown in \cref{figure_State} before $t = 4$ , and in the steady states, i.e., $t > 4$ no trigger is activated indicating the Gaussian process models in each agent is sufficiently accurate to achieve the desired $\epsilon$-average consensus.
	In particular, totally $82$, $100$, $71$ and $112$ trigger events occur on agent $1$ to $4$ respectively, resulting in smaller training data sets compared to the offline setting with $|\mathbb{D}_i| = 150$.
	
	\subsection{Monte Carlo Test}
	
	To show the generalization of the proposed controller with event-triggered learning, a Monte Carlo test is implemented, in which each case in \cref{subsection_simulation_setting} is repeated $100$ times with evenly randomly selected initial states $x_i(0) \in \mathbb{X}$ for $\forall i \in \mathcal{V}$ and random initial data sets for case (a) and (c).
	
	\begin{figure}
		\centering
		% \vspace{0.3cm}
		\begin{tikzpicture}
			\def\file{fig/MonteCarloError.txt}
			\begin{semilogyaxis}[xlabel={$t$},ylabel={$\| \bm{x}(t) - \bm{1}_N \bar{x}^* \|$},
				xmin=0.01, ymin = 7e-3, xmax = 9.99,ymax=30,legend columns=4,
				width=0.48\textwidth,height=4.5cm,legend pos=north east]
				\addplot[tab10_blue, thick]    table[x = t_set_ff , y  = norm_e_set_ff_all_mean ]{\file};
				\addplot+[name path=max_ff,black,no markers, draw=none] table[x = t_set_ff , y  = norm_e_set_ff_all_max ]{\file};
				\addplot+[name path=min_ff,black,no markers, draw=none] table[x = t_set_ff , y  = norm_e_set_ff_all_min ]{\file};
				\addplot[tab10_blue!20] fill between[of=max_ff and min_ff];
				
				\addplot[tab10_orange, thick]    table[x = t_set_ft , y  = norm_e_set_ft_all_mean ]{\file};
				\addplot+[name path=max_ft,black,no markers, draw=none] table[x = t_set_ft , y  = norm_e_set_ft_all_max ]{\file};
				\addplot+[name path=min_ft,black,no markers, draw=none] table[x = t_set_ft , y  = norm_e_set_ft_all_min ]{\file};
				\addplot[tab10_orange!20] fill between[of=max_ft and min_ft];
				
				\addplot[tab10_green, thick]    table[x = t_set_tf , y  = norm_e_set_tf_all_mean ]{\file};
				\addplot+[name path=max_tf,black,no markers, draw=none] table[x = t_set_tf , y  = norm_e_set_tf_all_max ]{\file};
				\addplot+[name path=min_tf,black,no markers, draw=none] table[x = t_set_tf , y  = norm_e_set_tf_all_min ]{\file};
				\addplot[tab10_green!20] fill between[of=max_tf and min_tf];
				
				\addplot[tab10_red, thick]    table[x = t_set_tt , y  = norm_e_set_tt_all_mean ]{\file};
				\addplot+[name path=max_tt,black,no markers, draw=none] table[x = t_set_tt , y  = norm_e_set_tt_all_max ]{\file};
				\addplot+[name path=min_tt,black,no markers, draw=none] table[x = t_set_tt , y  = norm_e_set_tt_all_min ]{\file};
				\addplot[tab10_red!20] fill between[of=max_tt and min_tt];
				
				\legend{
					(a),,, {Variance},
					(b),,, {Variance},
					(c),,, {Variance},
					(d),,, {Variance}
				}
			\end{semilogyaxis}
		\end{tikzpicture}
		\vspace{-0.3cm}
		\caption{
			Average consensus error over time with variance.
		}
		\vspace{-0.3cm}
		\label{figure_Error}
	\end{figure}
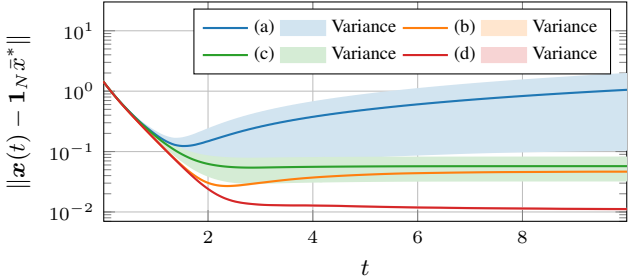 
	
	The average consensus error $\| \bm{x} - \bm{1}_N \bar{x}^* \|$ for each case is shown in \cref{figure_Error}, where the cases (a) and (c) with merely offline learning induce larger variance, showing strong dependency on the initial training sets.
	In comparison, two online learning approaches (b) and (d) demonstrate smaller average consensus error with smaller variance, because the training sets created through online data collection have desired prediction accuracy for the average consensus problem, i.e., $\hat{\eta}_i(\cdot) \le \underline{\eta}$ as in \eqref{eqn_sigma_after_update}.
	Moreover, there exists a divergence trend for two cases (a) and (b) with conventional distributed control law by considering the increasing average consensus error, indicating the observations in \cref{figure_State} (a) and (b) are common.
	Due to the larger bias between actual convergence state $\bar{x}_{\infty}$ and $\bar{x}^*$ by using \eqref{eqn_conventional_control_law}, it is intuitive that cases (a) and (b) induce larger average consensus errors than case (c) and (d) with the proposed controller.

	\begin{figure}[t] 
		\centering
		\vspace{0.3cm}
		\begin{tikzpicture}
			\def\file{fig/MonteCarloDataQuantity.txt}
			\begin{axis}[ylabel style={align=center},ylabel={Maximum Size \\ of the Data Set},
				ymin=1, ymax=260,legend columns=2,
				xmin=0.5,xmax=4.5,
				width=0.48\textwidth,height=4cm,legend pos= north east,
				ylabel shift = -0.1cm,  xshift=0cm,
				ybar,
				xtick={1,2,3,4},xticklabel style={align=center}, 
				xticklabels = {{Agent 1},{Agent 2},{Agent 3},{Agent 4}},
				bar width=0.3cm,ybar=0.0cm,
				legend image code/.code={
					\draw [#1] (0cm,-0.1cm) rectangle (0.4cm,0.1cm); },]
				\addplot[fill=set_red!30]    table[x = AgentNrSet , y  = GP_DataQuantitySet_tf_mean ]{\file};
				\addplot[fill=set_blue!30]    table[x = AgentNrSet , y  = GP_DataQuantitySet_ft_mean ]{\file};
				\addplot[fill=set_green!30]    table[x = AgentNrSet , y  = GP_DataQuantitySet_tt_mean ]{\file};
				
				\addplot [black, only marks, mark=.] plot [error bars/.cd, y dir=both, y explicit relative]
				table [x = AgentNrSet , y  = GP_DataQuantitySet_tf_mean, y error plus=GP_DataQuantitySet_tf_var, y error minus=GP_DataQuantitySet_tf_var] {\file};
				\addplot [set_blue, only marks, mark=.] plot [error bars/.cd, y dir=both, y explicit relative, error bar style={line width=0.8pt, solid}]
				table [x = AgentNrSet , y  = GP_DataQuantitySet_ft_mean, y error plus=GP_DataQuantitySet_ft_var, y error minus=GP_DataQuantitySet_ft_var] {\file};
				\addplot [set_green, only marks, mark=.] plot [error bars/.cd, y dir=both, y explicit relative, error bar style={line width=0.8pt, solid}]
				table [x = AgentNrSet , y  = GP_DataQuantitySet_tt_mean, y error plus=GP_DataQuantitySet_tt_var, y error minus=GP_DataQuantitySet_tt_var] {\file};
				\legend{{Offline learning (a)\&(c)},{Online learning (b)},{Online learning (d)}}
			\end{axis}
		\end{tikzpicture}
        \vspace{-0.3cm}
		\caption{
			Maximal size of data set, which for offline learning in cases (a) and (c) is $150$ from the size of initial data set.
			With naive event-triggered online learning \eqref{eqn_naive_trigger_function} in case (b), the maximal size of the data set for each agent $1$ to $4$ denotes $104 \pm 42$, $98 \pm 43$, $108 \pm 42$ and $104 \pm 43$, respectively.
			The maximal numbers of training samples in case (d) collected through decentralized event-triggered mechanism with \eqref{eqn_trigger_function} are $102 \pm 41$, $97 \pm 42$, $107 \pm 41$ and $102 \pm 42$ for agent $1$ to $4$, respectively.
		}
        \vspace{-0.3cm}
		\label{figure_DataQuantity}
	\end{figure}
	
	To show the comparison in \cref{figure_Error} makes sense with similar number of training data, the size of the eventual data set in each agent is shown in \cref{figure_DataQuantity}.
	It is shown both event-triggered online learning strategies results in a data set with average near $100$ samples and almost bounded by $150$ samples.
	This indicates the predefined size of initial data set for cases (a) and (b) is sufficient large for data efficiency comparison with event-triggered online learning in cases (b) and (d).
	Combining with the results in \cref{figure_Error}, it is demonstrated that with similar number of data sets event-triggered online learning performs better in the control performance with smaller average consensus errors.
	
	Until here, all results derived for the proposed distributed controller with decentralized event-triggered online learning in \cref{section_distributed_event_trigger} are observed, indicating the effectiveness of the proposed methods.
	%%%%%%%%%%%%%%%%%%%%%%%%%%%%%%%%%%%%%%%%%%%%%%%%%%%%%%%%%%%%%%%%%%%%%%%%%%%%%%%%
	\section{Conclusion} \label{section_conclusion}
	In this paper, we propose an innovative approach to achieving average consensus in MASs considering unknown dynamics. 
	This approach leverages a distributed consensus control law, which integrates an auxiliary dynamics. 
	Additionally, to enhance control performance in real-time online learning scenarios and ensure high data efficiency, a decentralized event-trigger mechanism is designed for each agent.
	Remarkably, this mechanism operates solely on local state information, obviating the necessity for knowledge of the global communication topology. 
	The theoretical proof of this work substantiates the stability of the system while guaranteeing an upper bound on the average consensus error making it particularly suitable for safety-critical applications. 
	We conclude by demonstrating the effectiveness of our proposed control and learning strategies through comparative numerical simulation.

	% \addtolength{\textheight}{-12cm}   % This command serves to balance the column lengths
	% on the last page of the document manually. It shortens
	% the textheight of the last page by a suitable amount.
	% This command does not take effect until the next page
	% so it should come on the page before the last. Make
	% sure that you do not shorten the textheight too much.

	%%%%%%%%%%%%%%%%%%%%%%%%%%%%%%%%%%%%%%%%%%%%%%%%%%%%%%%%%%%%%%%%%%%%%%%%%%%%%%%%
	\section*{Appendix}
	\label{appendix_an_example}
	
	In the appendix, an example is given to show the states of the system \eqref{eqn_dynamics} controlled by \eqref{eqn_conventional_control_law} are unbounded under bounded disturbance.
	We observe a simple MAS with only $2$ agents under fully connected communication topology.
	Moreover, let the dynamics of each agent follows \eqref{eqn_dynamics}, which is then controlled by \eqref{eqn_conventional_control_law}.
	Then, the concatenated controlled system is written as
	\begin{align} \label{eqn_example_2DoF_controlled_dynamics}
		\begin{bmatrix}
			\dot{x}_1 \\ \dot{x}_2
		\end{bmatrix} = - c \begin{bmatrix}
			1 & -1 \\ -1 & 1
		\end{bmatrix} \begin{bmatrix}
			x_1 \\ x_2
		\end{bmatrix} + \begin{bmatrix}
			f(x_1) - \mu_1(x_1) \\ f(x_2) - \mu_1(x_2)
		\end{bmatrix}
	\end{align}
	with $\mu_i(x_i)$ is sufficiently accurate such that only a small constant prediction error $\varepsilon \in \mathbb{R}_+$ is left, i.e., $f(x_i) - \mu_i(x_i) = \varepsilon$ for all $i = 1, 2$.
	The explicit solution of \eqref{eqn_example_2DoF_controlled_dynamics} denotes
	% \begin{align}
	% 	x_1(t) =  \big(x_1(0) + x_2(0)&\big) / 2 + \varepsilon t \\
	% 	&- \big(x_1(0) - x_2(0)\big) \exp (-2 c t) / 2, \nonumber \\
	% 	x_2(t) =  \big(x_1(0) + x_2(0)&\big) / 2 + \varepsilon t\\
	% 	&+ \big(x_1(0) - x_2(0)\big) \exp (-2 c t) / 2. \nonumber
	% \end{align}
    \begin{align}
		\begin{bmatrix}
			x_1(t) \\ x_2(t)
		\end{bmatrix}\!\! \!=\! \!\!\begin{bmatrix}
			1 \\ 1
		\end{bmatrix}\!\! \frac{x_1\!(0)\! \!+\!\! x_2\!(0)\! \!+\! 2\varepsilon t}{2}\!  - \!\begin{bmatrix}
			1 \\ \!-\!1
		\end{bmatrix}\! \frac{x_1\!(0)\! \!-\! x_2\!(0)\!}{2} \exp (\!-\!2 c t)
	\end{align}
	For non-zero prediction error $\varepsilon$, the unboundness of the state trajectory is obvious to see by considering
	\begin{align}
		\lim_{t \to \infty} x_i(t) = \big(x_1(0) + x_2(0)\big) / 2 + \varepsilon \lim_{t \to \infty} t \to \infty,
	\end{align}
	for $i = 1, 2$.
	Therefore, the conventional control law \eqref{eqn_conventional_control_law} can only stabilize the MAS with \eqref{eqn_dynamics} for completely known dynamics, i.e., $f(x_i) = \mu_i(x_i)$, and not suitable in our setting.
	There also exists other robust or adaptive control laws for average consensus, which will be regarded as future work due to our main focus on the decentralized event-trigger design for online learning in multi-agent systems.
	
		%%%%%%%%%%%%%%%%%%%%%%%%%%%%%%%%%%%%%%%%%%%%%%%%%%%%%%%%%%%%%%%%%%%%%%%%%%%%%%%%
	
% \section*{Acknowledgment}
% 		This work has been financially supported by the Federal Ministry of Education and Research of Germany in the programme of ``Souverän. Digital. Vernetzt.'' under joint project 6G-life with project identification number: 16KISK002 and the Germany Federal Ministry of Health (BMG) under grant No. 2523DAT400 (project ``AI-assisted analysis and visualization of pandemic situations'' | AI-DAVis-PANDEMICS).
	
	%%%%%%%%%%%%%%%%%%%%%%%%%%%%%%%%%%%%%%%%%%%%%%%%%%%%%%%%%%%%%%%%%%%%%%%%%%%%%%%%
	
	%References are important to the reader; therefore, each citation must be complete and correct. If at all possible, references should be commonly available publications.
	\bibliography{refs}

\end{document}